\documentclass{commat}

%%% AUTHORS' COMMANDS %%%

\title{Trace-based cryptanalysis of cyclotomic $R_{q,0}\times R_q$-PLWE for the
non-split case}

\author{\small Iv\'an Blanco-Chac\'on, Ra\'ul Dur\'an-D\'{\i}az, Rahinatou Yuh Njah Nchiwo and Beatriz Barbero-Lucas}

\authorinfo[I. Blanco-Chac\'on]{University of Alcal\'a, Spain}{ivan.blancoc@uah.es}

\authorinfo[R. Dur\'an-D\'{\i}az]{University of Alcal\'a, Spain}{raul.duran@uah.es}

\authorinfo[R. Y. Njah Nchiwo]{Aalto University School of Science, Finland}{rahinatou.njahepousenchiwo@aalto.fi}

\authorinfo[B. Barbero-Lucas]{University College, Dublin}{beatriz.barberolucas@ucdconnect.ie}

\abstract{We describe a decisional attack against a version of the PLWE
problem in which the samples are taken from a certain proper subring of large
dimension of the cyclotomic ring $\mathbb{F}_q[x]/(\Phi_{p^k}(x))$ with $k>1$ in
the case where $q\equiv 1\pmod{p}$ but $\Phi_{p^k}(x)$ is not totally split over
$\mathbb{F}_q$. Our attack uses the fact that the roots of $\Phi_{p^k}(x)$ over suitable
extensions of $\mathbb{F}_q$ have zero-trace and has overwhelming success
probability as a function of the number of input samples. An implementation in Maple and
some examples of our attack are also provided.}

\keywords{Polynomial Learning With Errors, Ring Learning With Errors, Lattice-based Cryptography}

\msc{94A60, 68W20, 12-04}

\VOLUME{31}
\NUMBER{2}
\firstpage{115}
\DOI{https://doi.org/10.46298/cm.11153}

\begin{document}

\section{Introduction}\label{s1}

One of the features which makes lattice-based cryptography so attractive is the
fact that the security of its schemes is based on worst-case versions
of classical
lattice problems, like the $\gamma$-approximate Shortest Vector problem (SVP).
If $\mathcal{S}$ is one of such schemes, breaking $\mathcal{S}$ implies that one
can solve any instance of that problem with essentially the same complexity as
that with which the scheme is broken. This
property can be rephrased by stating that the $\gamma$-SVP reduces to the scheme
$\mathcal{S}$ or that $\mathcal{S}$ admits a~reduction from the $\gamma$-SVP.

The first scheme based on the worst-case $\gamma$-SVP (for $\gamma(n)=n^c$,
fixed $c>0$, and an $n$-dimensional lattice) dates back to 1996 and is
due to Ajtai (\cite{ajtai}). This scheme
and subsequent refinements by Dwork, Cai, Nerurkar, Goldreich, Goldwasser,
Halevi and Micciancio (to cite only a~few) deal with one-way functions and
specially with the difficult issue of collision resistance. But it was not until 2005,
with Regev's pioneering work~\cite{regev}, that lattice-based methods reshaped the
landscape of public key cryptography, notably with the arising interest towards
post-quantum cryptography. Regev's scheme is based upon the so-called Learning
With Errors Problem (LWE), which roughly speaking consists in guessing a~secret
vector $\mathbf{s}\in\mathbb{F}_q^n$ if an adversary is given access to an
arbitrary number of pairs
$(\mathbf{a}_i,\langle\mathbf{a}_i,\mathbf{s}\rangle+e_i)\in\mathbb{F}
_q\times\mathbb{F}_q$ where $e_i\in\mathbb{F}_q$ are randomly sampled from a
discrete version of a~Gaussian distribution with small enough (but
not too small) variance. As Regev proves, this problems admits a~reduction from
the worst-case $\gamma$-SVP in quantum polynomial time for
$\gamma(n)=\mathcal{O}(n)$, which though it is not known to be NP-hard, is
reasonably \emph{not to far} from a~version which indeed is
proved to be so, namely, the
same problem but for $\gamma=\mathcal{O}(1)$ (see~\cite{micciancio}).

Unfortunately, Regev's cryptosystem is not practical for implementations and
deployment on average to large volumes of data, since the correction of the
scheme requires key sizes of order $\mathcal{O}(n^2)$. This drawback led
Stehl\'e to introduce the Polynomial Learning With Errors (PLWE, see~\cite{stehle}) problem and cryptosystem and later on, to Lyubashevsky, Peikert
and Regev to introduce the Ring Learning With Errors (RLWE, see~\cite{lpr})
problem, a~version of LWE where the public and secret keys are taken from a~ring
(a quotient polynomial ring in the PLWE case or a~quotient of the ring of
integers of a~number field for RLWE), rather than from the sheer vector space
$\mathbb{F}_q^n$.

Each of these two problems has its own virtues and drawbacks.
Security reduction proofs have been given for RLWE: in~\cite{lpr}, the
authors give a~reduction from the $\gamma$-SVP to the decisional version of RLWE
(for cyclotomic number fields, although in~\cite{rsw} the cyclotomic condition
is replaced by the far more general condition for the underlying number field to
be Galoisian). However, PLWE is more suitable for efficient implementations
thanks to very fast multiplication algorithms like Toom, Karatsuba or versions of the Number
Theoretic Transform (NTT) which are not available for number fields, where just
finding integral bases becomes cumbersome even for moderately large degree
and discriminant (let alone those of cryptographic size). Luckily, in a
good number of interesting cases both problems are \emph{equivalent} (see~\cite{blanco, blanco1, blanco2, dd, rsw, italians, italians2}).

The first attacks on chosen parameters for the PLWE problem are presented in~\cite{ELOS:2016:RCN} and~\cite{ELOS:2015:PWI} and they are valid if $f(x)$ has a
root $\rho\in\mathbb{F}_q$ such that either \emph{(i)} $\rho=1$, or \emph{(ii)} the
multiplicative order of $\rho$ is \emph{small}, or \emph{(iii)} the representative of
$\rho$ between $0$ and $q-1$ is also \emph{small} (say, $\rho=2$ or $3$). If
$d$ is the smallest positive integer such that $q^d \equiv 1\pmod {n}$, it is a
well-known fact that the $n$-th cyclotomic polynomial splits into
$\phi(n)/d$ irreducible degree $d$ factors over $\mathbb{F}_q[x]$, whose
roots have maximal order $n$ in the multiplicative group $\mathbb{F}_{q^d}^*$.
When $d=1$, such polynomial splits totally in $\mathbb{F}_q$ in which case it has
precisely $\phi(n)$ roots of maximal order $n$ and hence these attacks do not
apply to the cases \emph{(i)} and \emph{(ii)}.

In~\cite{blanco3}, the authors present an attack against PLWE in the case where
$f(x)$ has a quadratic irreducible factor over $\mathbb{F}_q[x]$ of the form
$x^2 +\rho$, where either $\rho=1$ or the multiplicative order of $-\rho$
determines a~\emph{smallness region} $\Sigma$ (see Section~\ref{s2} for details) such
that $|\Sigma|<q$. Apart from the order, the cardinality of $\Sigma$ depends on
the degree of the polynomial modulus as well as on the noise parameter and the
success of our attack just depends on \emph{(i)} the feasibility of constructing
$\Sigma$, \emph{(ii)} the fact that $|\Sigma|$ is upper bounded by $q$, and
\emph{(iii)} the fact
that the main loop in our algorithm can be performed with complexity
$O(\sqrt{p(p-1)(q-1)}q)$ where $n=p^k$.

The present communication deals with the case where $f(x)=\Phi_{p^k}(x)$ and
$q=1+p^2 u$ with $u$ coprime to $p$. In this case (see Section~\ref{s2}), $f(x)$
decomposes into $p(p-1)$ irreducible factors on $\mathbb{F}_q[x]$, each of degree
$p^{k-2}$ and each of these has a~root over $\mathbb{F}_{q^{p^{k-2}}}$ of trace
zero. Leaving aside the splitting case and the case where $\Phi_{p^k}(x)$
remains irreducible over $\mathbb{F}_q$, this work and~\cite{blanco3} can
somehow be considered as \emph{extreme} cases: here the irreducible factors
have maximal degree (namely, $p^{k-2}\neq 0$) whereas in~\cite{blanco3} the degree
is minimal (namely, 2).

In both works, we exploit the existence of the zero-trace root to produce a~very
effective decisional attack against a~variant of the PLWE problem, in which the
samples $(a(x),b(x))$ belong to $R_{q,0}\times R_q$, where $R_{q,0}$ is a
subring of $R_q$ that, as $\mathbb{F}_q$-subspace, has either dimension $n-1$ (in~\cite{blanco3}), or dimension $p^{n-1}(p-1)-p^{n-2}+1$ (in this work). It is the maximality
of this dimension what allowed us to reduce the PLWE to its $R_{q,0}\times
R_q$-version in probabilistic polynomial time in~\cite{blanco3} though,
unfortunately, the reduction is still unclear for the present~case.

As we discuss at the end of Section~\ref{s3}, the reduction would still be possible in
the hypothetical case that a~surjective ring homomorphism existed from $R_q$ to
$R_{q,0}$ so that small residues were taken to small residues. But the
existence of this morphism is currently uncertain and left as an open problem.

The present work is organised as follows: in Section~\ref{s2} we recall the definitions
of the RLWE and PLWE and review in a~very sketchy way (with due references
provided) how and in which sense these problems admit reductions from supposedly
hard problems dealing with ideal lattices.
We also recall several properties on the factoring of
cyclotomic polynomials in the prime power conductor case since they will be
applied in our attack. The first subsection of Section~\ref{s3} recalls the attacks for
$\theta=1$ and $\theta$ of small order in~\cite{ELOS:2016:RCN} and~\cite{ELOS:2015:PWI} and discusses their limitations in the polynomial setting. The
second subsection introduces our attack on the $R_{q,0}\times R_q$-PLWE problem
and gives a~detailed proof of its complexity. Even if the proof of the success of
our attack is essentially the same as the one given in~\cite{blanco3}, it is
repeated here to make the work self-contained. Finally, in Section~\ref{s4} we provide
numerical simulations of our algorithm in Maple and comment on its performance.

Finally, the authors are thankful to the referee for making helpful suggestions which helped to improve the quality of our manuscript.

\section{The R/PLWE problems and their relation with ideal lattices}\label{s2}

In this section we recall the definition of the Polynomial Learning With Errors
problem (PLWE) as well as we explain its relation with two supposedly
\emph{hard} problems about lattices: the shortest vector problem (SVP) and the
bounded distance decoding problem (BDD) over ideal lattices.

Though the present work aims at presenting certain types of attacks against
PLWE, we will give definitions for both RLWE and PLWE problems for the sake of
completeness and because they are intimately connected (actually equivalent in
certain suitable settings).

Let us recall first two kinds of random variables that will intervene in our
definitions:

\begin{definition}
Given an $\mathbb{F}_q$-vector space $\mathcal{V}$ of dimension
$d$, we say that a~random variable $X$ with values over $\mathcal{V}$ is uniform
if $P[X=v]=1/q^d$ for each $v\in \mathcal{V}$.
\end{definition}
In Section~\ref{s3} we will need this fact, which has a~standard proof:
\begin{lemma}
If $X_1,X_2,\dotsc,X_n$ are independent uniform distributions over
$\mathbb{F}_q$ then, for each
$\lambda_1,\lambda_2,\dotsc,\lambda_n\in\mathbb{F}_q$, not all of them zero, the
variable $\sum_{i=1}^n \lambda_i X_i$ is also uniform.
\label{unifl}
\end{lemma}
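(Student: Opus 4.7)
The plan is to reduce to the one-variable case by conditioning on all but one of the variables. Without loss of generality, relabel so that $\lambda_1\neq 0$ (this is possible because not all $\lambda_i$ vanish). Then fix any target value $y\in\mathbb{F}_q$ and compute
\[
P\!\left[\sum_{i=1}^n \lambda_i X_i = y\right]
= \sum_{(x_2,\dotsc,x_n)\in\mathbb{F}_q^{n-1}}
 P\!\left[\lambda_1 X_1 = y - \sum_{i=2}^n \lambda_i x_i\right]
 \cdot P[X_2=x_2,\dotsc,X_n=x_n],
\]
using the independence of $X_1$ from $X_2,\dotsc,X_n$.

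Next, since $\lambda_1$ is a unit in $\mathbb{F}_q$, for every fixed tuple $(x_2,\dotsc,x_n)$ the equation $\lambda_1 X_1 = y - \sum_{i\geq 2}\lambda_i x_i$ has a unique solution in $X_1$, namely $X_1 = \lambda_1^{-1}\bigl(y-\sum_{i\geq 2}\lambda_i x_i\bigr)$. Since $X_1$ is uniform over $\mathbb{F}_q$, the inner probability equals $1/q$ regardless of the values $x_2,\dotsc,x_n$. Therefore
\[
P\!\left[\sum_{i=1}^n \lambda_i X_i = y\right]
= \frac{1}{q}\sum_{(x_2,\dotsc,x_n)} P[X_2=x_2,\dotsc,X_n=x_n] = \frac{1}{q},
\]
which is exactly the definition of uniformity of $\sum_i \lambda_i X_i$ on $\mathbb{F}_q$.

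There is no real obstacle here: the only subtlety is the choice of the nonzero coefficient (to guarantee invertibility in $\mathbb{F}_q$), and the correct use of independence to factor the joint probability when conditioning. The argument works verbatim for any finite field, and one could alternatively invoke a characters/Fourier argument on $\mathbb{F}_q$, but the conditional probability approach above is the most economical and self-contained.
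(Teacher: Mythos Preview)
Your proof is correct and uses essentially the same idea as the paper's: apply the total probability theorem by conditioning on the other variables and then use that a uniform variable hits any prescribed value with probability $1/q$. The only cosmetic difference is that the paper first reduces (implicitly handling the scalars $\lambda_i$) to the two-variable case $X_1+X_2$ and then conditions on $X_2$, whereas you condition on $X_2,\dotsc,X_n$ all at once and exploit the invertibility of $\lambda_1$ directly; both routes are equally short and valid.
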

\begin{proof}
It is clearly sufficient to check that if $X_1$ and $X_2$ are uniform, then $X_1+X_2$ is also uniform. But for $i\in\mathbb{F}_q$, using the Total Probability Theorem
we have
\begin{align*}
P[X_1+X_2=i]
&=\sum_{j\in\mathbb{F}_q}P[X_1+X_2=i|X_2=j]P[X_2=j] \\
&=\frac{1}{q}\sum_{j\in\mathbb{F}_q}P[X_1=i-j]=\frac{1}{q}.
\qedhere
\end{align*}
\end{proof}

The second kind of random variable already requires to recall some notions on
lattices. Here we are following Section 2 of~\cite{lpr}.

Let $n$, $s_1$ and $s_2$ be either zero or natural numbers with $n=s_1+2s_2$. Let us consider
the $\mathbb{R}$-vector subspace of $\mathbb{C}^n$ defined as
\[
\Lambda_n=\{(x_1,\dotsc,x_n)\in\mathbb{R}^{s_1}\times\mathbb{C}^{2s_2}:
x_{s_1+i}=\overline{x}_{s_1+s_2+i}\mbox{ for }1\leq i\leq s_2\},
\]
which, endowed with the induced Hermitian metric in $\mathbb{C}^n$, is a
Euclidean space of dimension~$n$.

For $r>0$, the Gaussian function
$\rho_r(\mathbf{x})=\exp(-\pi||\mathbf{x}||^2 /r^2)$ defines, once normalised,
the density function of a~Gaussian random variable with null vector of means and
covariance matrix $r I_n$, with $I_n$ the $n$-dimensional identity matrix.
Moreover, by fixing a~basis
$\{\mathbf{h}_i\}_{i=1}^n$ of $\Lambda_n$ and for a~vector
$\mathbf{r}=(r_1,\dotsc,r_n)\in\mathbb{R}_{+}^n$ such that
$r_{s_1+i}=r_{s_1+s_2+i}$ for $1\leq i\leq s_2$, if $\{\mathcal{N}(0,
r_i)\}_{i=1}^n$ is a~set of independent $1$-dimensional zero-mean Gaussian
variables, of variance $r_i^2$, the variable
$D_{\mathbf{r}}=\sum_{i=1}^n \mathcal{N}(0,r_i)\mathbf{h}_i$ is an elliptic
$n$-dimensional zero-mean Gaussian variable whose covariance
matrix has the vector $\mathbf{r}$ as main diagonal and $0$ elsewhere. Denote by
$\rho_{\mathbf{r}}(\mathbf{x})$ the density function of $D_{\mathbf{r}}$.

For us, by a~lattice over $\Lambda_n$, we will understand a~pair
$(\mathcal{L},\iota)$ where $\mathcal{L}$ is a~finitely generated abelian group
and $\iota:\mathcal{L}\to\Lambda_n$ is a~group monomorphism. We will only deal
with full-rank lattices in this communication, i.e., those whose
$\mathbb{Z}$-rank is precisely $n$, the ambient space dimension.

If $\mathcal{L}$ is such a~full-rank lattice embedded in $\Lambda_n$ and
$\{\mathbf{h}_i\}_{i=1}^n$ is a
$\mathbb{Z}$-basis of $\iota(\mathcal{L})$ and henceforth a~basis of $\Lambda_n$
as a~vector
space, we can define the notion of a~Gaussian variable supported on
$\mathcal{L}$ as well as its discrete version, a~key ingredient for the problems
under study:

\begin{definition}
A random variable $X$ supported on $\mathcal{L}$ (hence discrete) is
called
a discrete elliptic Gaussian random variable whenever its probability function
is
\[
P[X=\mathbf{x}]=\frac{\rho_{\mathbf{r}}(\mathbf{x})}{\rho_{\mathbf{r}}(\mathcal{
L})}\mbox{ for }\mathbf{x}\in\mathcal{L}.
\]
\end{definition}
Figure 1 shows an example of $2$-dimensional discrete Gaussian, where as
expected, most of the probability mass is located around the mean vector, the
origin in this case.

\begin{figure}
  \centering
   \includegraphics[width=0.8\textwidth]{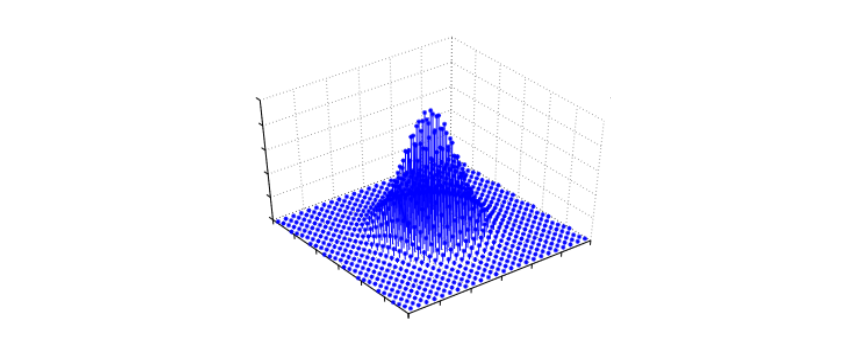}
\caption{Discrete Gaussian on $\mathbb{Z}^2$ (with permission of Oded Regev)}
\end{figure}

\def\figurename{Algorithm}

\subsection{The R/PLWE problems} Here we denote by $K/\mathbb{Q}$ a~Galois
extension of degree $n$, or equivalently, the splitting field of a~monic
irreducible polynomial $f(x)\in\mathbb{Z}[x]$ of degree $n$, minimal polynomial
of, say, $\alpha\in K$. Each automorphism of the Galois group
$\mathrm{Gal}(K/\mathbb{Q})$ is hence determined by its value at $\alpha$.
Denote these automorphisms by $\{ \sigma_i\}_{i=1}^n$ with $\sigma_1=I$, the
identity, and let us label
the roots of $f(x)$ such that $\{\alpha_i\}_{i=1}^{s_1}$ is the set of real roots and
$\{\alpha_i\}_{i=s_1+1}^n$ is the set of
$s_2$ pairs of complex non-real roots with $\alpha_i=\overline{\alpha_{i+s_2}}$
for
$s_1+1\leq i\leq s_1+s_2$. When $s_1=0$ we say that $K$ is totally complex and
when $s_2=0$ we say that $K$ is totally real.

As usual, the notation $\mathcal{O}_K$ stands for the ring of integers of $K$.
We will moreover assume, for the sake of simplicity, that $K$ is monogenic, i.e.
that $\mathcal{O}_K=\mathbb{Z}[\alpha]$ for some $\alpha\in\mathcal{O}_K$. Let
us denote $R:=\mathbb{Z}[x]/(f(x))\simeq \mathbb{Z}[\alpha]$ and for a~prime $q
\in
\mathbb{Z}$, let us set $R_q:=R/qR\cong\mathbb{F}_q[x]/(f(x))$.

Both rings $R$ and $\mathcal{O}_K$ can be endowed with a~lattice
structure over $\mathbb{R}^n$:

\begin{definition}[The coefficient embedding]
 For the ring $R$, the coefficient
embedding is
\[
\begin{array}
{ccc}
\sigma_{coef}: R & \hookrightarrow & \mathbb{R}^n \\
\sum_{i=0}^{n-1}a_i\overline{x}^i & \mapsto & (a_0,\dotsc,a_{n-1}),
\end{array}
\]
with $\overline{x}^i$ being the class of $x^i$ modulo the principal ideal
$(f(x))$.
\end{definition}

The ring $\mathcal{O}_K$, as well known, is finitely generated over
$\mathbb{Z}$ of rank $n$ and hence admits a~lattice structure too:
\begin{definition}[The canonical embedding]
 For the ring $\mathcal{O}_K$, the
canonical embedding is
\[
\begin{array}
{ccc}
\sigma_{can}: \mathcal{O}_K & \hookrightarrow & \Lambda_n\\
\alpha & \mapsto & (\sigma_1(\alpha),\dotsc,\sigma_n(\alpha)).
\end{array}
\]
\end{definition}

The lattices $\sigma_{can}(\mathcal{O}_K)$ and $\sigma_{coef}(R)$ inherit a
multiplicative structure from the product defined on their corresponding
domains, which motivates the following definition:

\begin{definition}[Ideal lattices]
 A lattice $\mathcal{L}$ is called an ideal lattice
if there exists a
ring $R$, an ideal $I\subseteq R$, and an additive group monomorphism $\sigma:
R\hookrightarrow\mathbb{R}^n$ such that $\mathcal{L}=\sigma(I)$.
\end{definition}

Let now $q\geq 2$ be a~prime. If $\chi$ is a~discrete Gaussian distribution
supported on either $\sigma_{can}(\mathcal{O}_K)$ or $\sigma_{coef}(R)$, we can
reduce component-wise its outputs modulo $q$. Such a~random variable is referred
to as a~discrete Gaussian modulo $q$.

Let $\chi_K$ be a~discrete Gaussian of $0$ mean and covariance matrix $\Sigma_K$
supported on the quotient $\mathcal{O}_K/q\mathcal{O}_K$ (or, rather, in the
$n$-dimensional torus
$\mathbb{T}_K=(K\otimes\mathbb{R})/\mathcal{O}_K$) and let $\chi_R$ a~discrete
Gaussian of $0$ mean and covariance matrix $\Sigma_{R}$, supported on $R_q$
embedded in the torus $\mathbb{T}_R=(R\otimes\mathbb{R})/R$.

\begin{definition}[RLWE/PLWE oracles]
Given $s\in \mathcal{O}_K/q\mathcal{O}_K$ (resp.
$R_q$), an RLWE oracle associated to the triple
$(\mathcal{O}_K/q\mathcal{O}_K,s,\chi_K)$ (resp. a~PLWE oracle attached
to the triple $(R_q,s, \chi_R)$) is a~probabilistic algorithm $\mathcal{A}_{s,
\chi_K}$ (resp.
$\mathcal{A}_{s, \chi_R}$) which runs as follows:
\begin{enumerate}
\item Samples an element $a \in \mathcal{O}_K/q\mathcal{O}_K$ (resp. in $R_q$)
from a~uniform distribution.
\item Samples an element $e\in \mathcal{O}_K/q\mathcal{O}_K$ from $\chi_K$
(resp. in $R_q$ from $\chi_R$).
\item Outputs the element $(a, b=as+e)$.
\end{enumerate}
\end{definition}

Setting
$\left(\mathcal{O}_K/q\mathcal{O}_K\right)^2 :=(\mathcal{O}_K/q\mathcal{O}
_K)\times (\mathcal{O}_K/q\mathcal{O}_K)$ and $R^2_q:=R_q\times R_q$, the
(decision version of) the R/P-LWE problems are defined as
follows:

\begin{definition}[RLWE/PLWE decision problems]
Let $\chi_K$ and $\chi_R$ be as before.
The R/P-LWE problem consists in deciding with non-negligible advantage, for a
set of samples of arbitrary size $(a_i, b_i)\in
\left(\mathcal{O}_K/q\mathcal{O}_K\right)^2$ (resp. in $R_q^2$), whether they
are sampled from the R/P-LWE oracle or from the uniform distribution.
 \end{definition}

From now on we will deal with the PLWE problem so that $R$ will be embedded into
$\mathbb{R}^n$ via the
coefficient embedding and so that our discrete Gaussians will be supported on
the quotient ring $R_q$.

\subsection{Lattice related problems.}
As we pointed out in the introduction,
the problems LWE, RLWE, (and PLWE whenever its equivalent to RLWE) admit quantum
polynomial time reductions from several versions of the SVP, which we recall
next:
\begin{definition}[The $\gamma$-SVP]
 Let $\gamma:\mathbb{N}\to\mathbb{R}_{+}$ be a
function. Given a~full-rank lattice $\Lambda$ of rank $n$, together with a
$\mathbb{Z}$-basis of $\Lambda$, the $\gamma$-Shortest Vector Problem consists
in returning an element $v\in\Lambda\setminus\{0\}$ such that
\[
||v||_2\leq \gamma(n)\lambda_1(\Lambda),
\]
where $\lambda_1(\Lambda)=\min_{x\in\Lambda\setminus\{0\}}||x||_2$ and
$||\phantom{a}||_2$ denotes the Euclidean norm on $\mathbb{R}^n$ (although the
problem admits, obviously, a~version with respect with any $l_p$ norm).
\end{definition}

In~\cite{micciancio}, the author proves that the $\gamma$-SVP is NP hard for
$\gamma(n)\leq\sqrt{2}$ and $n \geq 1$. As already mentioned in the
introduction, the author of~\cite{regev} proves that there exists a~quantum
polynomial time algorithm $R$ with complexity $\mathcal{O}(p(n))$ for certain
polynomial $p(x)$ that gives a~reduction from the worst-case $\gamma$-SVP with
$\gamma(n)=\mathcal{O}(n)$ to the LWE problem. This means that if an adversary
$\mathcal{A}$ existed that were able to solve LWE with non-negligible advantage
with complexity $\mathcal{O}(f(n))$, then this adversary could be turned into an
adversary $R(\mathcal{A})$ able to solve $\gamma$-SVP with complexity
$\mathcal{O}(f(n)p(n))$, also with non-negligible advantage.

As for RLWE,~\cite{lpr} shows the existence of a~quantum polynomial time
algorithm $R$ with complexity $\mathcal{O}(p(n))$ that gives a~reduction from
the worst-case $\gamma$-SVP for ideal lattices to the RLWE problem in decisional
version. The $\gamma$-SVP for ideal lattices is the restriction of
$\gamma$-SVP to the class of ideal lattices $(I,\sigma_{can})$ where $I$ is an
ideal of the ring of integers of a~cyclotomic field (see next subsection) and
$\sigma_{can}$ is the canonical embedding. In~\cite{rsw}, the authors elaborate
on the same ideas and generalise the class of ideal lattices for which the
reduction exists to those corresponding to rings of integers of Galois number
fields.

It is convenient to point out that the NP-hardness of SVP is established to
uniformly bounded functions $\gamma$, hence it is not clear that LWE is NP-hard,
even if empirical evidence strongly suggests that it is an \emph{intractable}
problem. Unfortunately, for RLWE the situation is even weaker, since it is not
even known whether $\gamma$-SVP is NP-hard for ideal lattices for uniformly
bounded $\gamma$, which is currently an active research area.

\subsection{The cyclotomic polynomial and its splitting behaviour over finite
fields}

We will denote $K_n:=\mathbb{Q}(\zeta_n)$, the $n$-th
cyclotomic field (where $\zeta_n$ denotes a~primitive complex $n$-root of
unity). It is well known that $K_n$ is the splitting field of the $n$-th
cyclotomic
polynomial, which we will denote by $\Phi_n(x)$. In particular, $K_n/\mathbb{Q}$
is a~Galois extension of degree $m:=\phi(n)$, where $\phi$ stands for the
Euler's totient function. It is also well known that $K_n$ is monogenic, in
particular $\mathcal{O}_K=\mathbb{Z}[\zeta_n]$.

When $q\equiv 1\pmod{n}$, the prime $q$ is totally split in $\mathcal{O}_K$ and
hence $\Phi_n(x)$ has $m$ different roots in
$\mathbb{F}_q$, all of them of maximal multiplicative order $n$. We will deal,
however, with the non-totally split case and moreover, we will suppose that
$n=p^k$ for a~prime $p$. The following result addresses the factorisation of
$\Phi_n(x)$ into irreducible factors in $\mathbb{F}_q[x]$:

\begin{theorem}[\cite{WZFY:2017:EFC}]
 Let $q=1+p^A u$, with $A\geq 1$, and $p$, $q$
primes. Suppose that $(u,p)=1$ and denote by $\Omega(p^A)$ the group of
primitive $p^A$-th roots of unity in $\mathbb{F}_q$. Assume $n> A$. Then, we
have:
\[
\Phi_{p^n}(x)=\prod_{\rho\in\Omega(p^A)}\left(x^{p^{n-A}}-\rho\right),
\]
where the polynomials $x^{p^{n-A}}-\rho$ are irreducible over $\mathbb{F}_q$.
\label{chinese}
\end{theorem}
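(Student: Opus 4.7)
The plan is to establish the polynomial identity by matching degrees and root sets, and then to deduce the irreducibility of each factor $x^{p^{n-A}}-\rho$ from a count of the irreducible factors of $\Phi_{p^n}(x)$ over $\mathbb{F}_q$, using the multiplicative order of $q$ modulo $p^n$.

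For the identity, I would first verify that both sides are monic of degree $p^{n-1}(p-1)$: the left-hand side by definition of the totient, and the right-hand side because $|\Omega(p^A)|=\phi(p^A)=p^{A-1}(p-1)$ while each factor has degree $p^{n-A}$. Next, given $\rho\in\Omega(p^A)$ and any $\alpha\in\overline{\mathbb{F}_q}$ with $\alpha^{p^{n-A}}=\rho$, the identity $\alpha^{p^n}=\rho^{p^A}=1$ shows that $\alpha$ is a $p^n$-th root of unity, while $\alpha^{p^{n-1}}=\rho^{p^{A-1}}\neq 1$ (since $\rho$ has exact order $p^A$) shows its order is exactly $p^n$, so $\alpha$ is a root of $\Phi_{p^n}(x)$. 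Since $\gcd(p,q)=1$, each polynomial $x^{p^{n-A}}-\rho$ is separable and has $p^{n-A}$ distinct roots; and for $\rho_1\neq\rho_2$ in $\Omega(p^A)$ the corresponding factors are coprime, because a common root $\alpha$ would force $\rho_1=\alpha^{p^{n-A}}=\rho_2$. Counting, the right-hand side has $|\Omega(p^A)|\cdot p^{n-A}=\phi(p^n)$ distinct roots, all primitive $p^n$-th roots of unity; since this matches the root set of $\Phi_{p^n}(x)$ and both polynomials are monic of the same degree, the identity follows.

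For the irreducibility, I would invoke the standard fact (already cited in the introduction) that the irreducible factors of $\Phi_{p^n}(x)$ over $\mathbb{F}_q$ have common degree equal to the multiplicative order $d$ of $q$ modulo $p^n$, and that their number is $\phi(p^n)/d$. Writing $q-1=p^A u$ with $\gcd(u,p)=1$, a lifting-the-exponent computation gives $v_p(q^d-1)=A+v_p(d)$, where $v_p$ denotes the $p$-adic valuation; hence the least $d$ with $q^d\equiv 1\pmod{p^n}$ is $d=p^{n-A}$. Therefore $\Phi_{p^n}(x)$ decomposes into exactly $\phi(p^n)/p^{n-A}=|\Omega(p^A)|$ irreducibles of degree $p^{n-A}$. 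Since the identity already displays the right-hand side as a product of $|\Omega(p^A)|$ factors of that same degree, any nontrivial further factorization of some $x^{p^{n-A}}-\rho$ would force the total number of irreducible factors of $\Phi_{p^n}(x)$ to exceed $|\Omega(p^A)|$, a contradiction; hence each $x^{p^{n-A}}-\rho$ is irreducible.

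The main obstacle I anticipate is the lifting-the-exponent step when $p=2$ and $A=1$, a case in which the formula $v_p(q^d-1)=A+v_p(d)$ needs a small adjustment. The hypothesis $q-1=2u$ with $u$ odd, however, forces $q\equiv 3\pmod 4$, and a short induction on $n-A$ recovers the required exponent, after which the counting argument above closes the proof unchanged.
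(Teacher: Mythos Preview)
The paper does not prove this theorem; it is quoted from \cite{WZFY:2017:EFC} and used as a black box, so there is no in-paper argument to compare against.

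Your approach is the standard one and is correct for odd $p$, and also for $p=2$ with $A\geq 2$ (there $q\equiv 1\pmod 4$, so the lifting-the-exponent identity $v_2(q^d-1)=v_2(q-1)+v_2(d)$ is valid and yields order $2^{n-A}$). The degree/root matching for the product identity and the counting of irreducible factors via the multiplicative order of $q$ modulo $p^n$ are exactly right.

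There is, however, a genuine gap in your handling of $p=2$, $A=1$: no ``short induction'' can recover the exponent $2^{n-A}=2^{n-1}$, because for $n\geq 3$ the statement as written is simply false. The group $(\mathbb{Z}/2^n\mathbb{Z})^*$ is non-cyclic of exponent $2^{n-2}$ when $n\geq 3$, so the multiplicative order of any $q$ modulo $2^n$ is at most $2^{n-2}<2^{n-1}$, and $\Phi_{2^n}(x)$ cannot have an irreducible factor of degree $2^{n-1}$ over $\mathbb{F}_q$. Concretely, with $q=3$ (so $A=1$) and $n=3$, the theorem would assert that $\Phi_8(x)=x^4+1$ is irreducible over $\mathbb{F}_3$, but in fact $x^4+1=(x^2+x+2)(x^2+2x+2)$ in $\mathbb{F}_3[x]$. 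The remedy is not an adjusted LTE formula but an extra hypothesis: restrict to odd $p$, or to $A\geq 2$ when $p=2$ (which is precisely the regime $A=2$ the paper adopts afterwards). Under that restriction your proof goes through unchanged.
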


We have the following straightforward consequence which will be useful later on:

\begin{corollary}
Notations as in Theorem~\ref{chinese}, for every $v\in\mathbb{N}$
such
that $(v,p)=1$, for each $\rho\in\Omega(p^A)$ and for each $0\leq k<n-A$, the
polynomial $x^{p^{n-k-A}}-\rho^v$ is irreducible over $\mathbb{F}_q[x]$.
\label{lemairr}
\end{corollary}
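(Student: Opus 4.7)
The plan is to reduce the statement to a direct application of Theorem~\ref{chinese} by adjusting two parameters: the exponent of $\rho$ and the outer index $n$.

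First, I would verify that $\rho^v \in \Omega(p^A)$. Since $\rho \in \Omega(p^A)$, its multiplicative order in $\mathbb{F}_q^\times$ equals $p^A$. The hypothesis $(v,p)=1$ implies $\gcd(v,p^A)=1$, so $\rho^v$ still has order $p^A$, i.e., $\rho^v$ is itself a primitive $p^A$-th root of unity in $\mathbb{F}_q$, hence $\rho^v \in \Omega(p^A)$.

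Next, I would set $m := n-k$ and observe that the inequality $0\le k < n-A$ translates into $A < m \le n$. Consequently, the triple $(q,A,m)$ satisfies the hypotheses of Theorem~\ref{chinese}: $q = 1+p^A u$ with $(u,p)=1$, and $m>A$. Applying the theorem with $n$ replaced by $m$ yields the factorisation
\[
\Phi_{p^{m}}(x) \;=\; \prod_{\sigma \in \Omega(p^A)}\bigl(x^{p^{m-A}}-\sigma\bigr),
\]
with each factor irreducible over $\mathbb{F}_q[x]$.

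Finally, taking $\sigma = \rho^v$ (which lies in $\Omega(p^A)$ by the first step) and substituting $m-A = n-k-A$, one reads off irreducibility of $x^{p^{n-k-A}}-\rho^v$ directly from the displayed factorisation. There is no real obstacle: the corollary is essentially a bookkeeping exercise showing that the family of polynomials allowed by the Theorem is closed under the operations $\rho \mapsto \rho^v$ (for $v$ coprime to $p$) and under replacing $n$ by any intermediate value $m$ with $A<m\le n$.
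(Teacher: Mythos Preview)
Your proof is correct, but it proceeds differently from the paper. Both arguments begin by observing that $\rho^v$ is again a primitive $p^A$-th root of unity. From there, the paper keeps $n$ fixed and argues by contradiction: a nontrivial factorisation $x^{p^{n-k-A}}-\rho^v=f(x)g(x)$ would yield, after the substitution $x\mapsto x^{p^k}$, a nontrivial factorisation $x^{p^{n-A}}-\rho^v=f(x^{p^k})g(x^{p^k})$, contradicting the irreducibility supplied by Theorem~\ref{chinese} at level $n$. You instead re-invoke Theorem~\ref{chinese} with the smaller parameter $m=n-k$ and read off irreducibility directly. Your route is slightly more direct and avoids the contrapositive; the paper's route has the minor advantage of appealing to Theorem~\ref{chinese} only at the single level $n$ already in play. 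Either way the corollary is immediate.
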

\begin{proof}
First, notice that $\rho^v$ is also a~primitive $p^A$-th root of
unity, hence if we could express $x^{p^{n-k-A}}-\rho^v =f(x)g(x)$ with
$deg(f(x)),deg(g(x))\geq 1$, it would follow that
$x^{p^{n-A}}-\rho^v =f(x^{p^k})g(x^{p^k})$, a~contradiction.
\end{proof}

\subsection{Fast evaluation of polynomials over finite fields} One of the issues
we must confront is to evaluate polynomial expressions over a~finite field at
elements of certain extensions, in the most efficient possible manner. In
particular, for cyclotomic prime conductors of almost-cryptographic size, the
well-known Horner's algorithm might easily become inefficient. The method that
we will use, due to Elia, Rosenthal and Schipani, is called \emph{automorphic
evaluation} drastically reduces the number of $\mathbb{F}_q$-products by
a square root factor.

\begin{theorem}[{\cite[Theorem 3]{ERS:2012:PEF}}]
 The minimum number of
$\mathbb{F}_q$-products
required to evaluate a~polynomial of degree $n$ with coefficients in
$\mathbb{F}_{q^s}$ at an element of $\mathbb{F}_{q^m}$ with $m\geq s$, is upper
bounded by
\[
2s(\sqrt{n(q-1)}+ 1/2).
\]
\label{auto}
\end{theorem}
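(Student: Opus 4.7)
The plan is to combine a Paterson--Stockmeyer style baby-step/giant-step evaluation with an outer reduction exploiting the multiplicative order $q-1$ of $\mathbb{F}_q^{*}$ inside $\mathbb{F}_{q^m}$. A pure Paterson--Stockmeyer count would only give $O(\sqrt{n})$ $\mathbb{F}_{q^m}$-products; to obtain the $\sqrt{n(q-1)}$ shape of the bound one must recycle precomputations across an entire family of lower-degree auxiliary polynomials, and to obtain the factor $2s$ one must carefully translate each $\mathbb{F}_{q^s}\times\mathbb{F}_{q^m}$ scalar product into $\mathbb{F}_q$-products.

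As a first step I would split the exponents according to their residue class modulo $q-1$: writing $i=j(q-1)+r$ with $0\le r\le q-2$ one rewrites
\[
P(x)\;=\;\sum_{r=0}^{q-2} x^{r}\,P_r\!\bigl(x^{q-1}\bigr),\qquad P_r(y)\;=\;\sum_{j\ge 0} a_{j(q-1)+r}\,y^{j},
\]
where each $P_r$ has coefficients in $\mathbb{F}_{q^s}$ and degree at most $d:=\lceil n/(q-1)\rceil$. Setting $\gamma:=\beta^{q-1}\in\mathbb{F}_{q^m}$, the evaluation reduces to computing every $P_r(\gamma)$ and reassembling them as $P(\beta)=\sum_{r}\beta^{r}P_r(\gamma)$.

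Next I would apply Paterson--Stockmeyer \emph{jointly} to the family $\{P_r\}_r$ at $\gamma$, so that the baby powers $\gamma,\gamma^{2},\dotsc,\gamma^{\lceil\sqrt{d}\,\rceil}$ and the matching giant powers $\gamma^{k\lceil\sqrt{d}\,\rceil}$ are computed once and shared. Each $P_r(\gamma)$ is then an $\mathbb{F}_{q^s}$-linear combination of stored $\mathbb{F}_{q^m}$ quantities; choosing an $\mathbb{F}_q$-basis of $\mathbb{F}_{q^s}$ extended to one of $\mathbb{F}_{q^m}$ (possible since the condition $m\ge s$ places $\mathbb{F}_{q^s}$ inside $\mathbb{F}_{q^m}$), every $\mathbb{F}_{q^s}\times\mathbb{F}_{q^m}$ product costs exactly $s$ $\mathbb{F}_q$-multiplications. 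Summing the inner Paterson--Stockmeyer cost $\approx 2\sqrt{d}$ over the $q-1$ values of $r$ gives roughly $2\sqrt{d}\,(q-1)=2\sqrt{n(q-1)}$, and the outer recombination $\sum_{r}\beta^{r}P_r(\gamma)$ is absorbed by a matching baby-step/giant-step on the powers of $\beta$, whose intermediate products are already available from the inner scheme. Multiplying by the factor $s$ gives the leading $2s\sqrt{n(q-1)}$, and the additive $+1/2$ tracks the ceilings in the square-root parameter choice.

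The main obstacle I expect is the bookkeeping of what genuinely counts as an $\mathbb{F}_q$-product after sharing the precomputations: one has to argue that Frobenius-like operations and $\mathbb{F}_q$-scalar combinations contribute nothing, that every $\mathbb{F}_{q^s}\times\mathbb{F}_{q^m}$ or $\mathbb{F}_{q^m}\times\mathbb{F}_{q^m}$ product occurring in the inner or outer schemes is tallied at most once, and that the inner and outer step sizes can be balanced consistently. Once this accounting is made precise, optimising the split parameter in $\lceil\sqrt{d}\,\rceil$ yields the stated upper bound $2s\bigl(\sqrt{n(q-1)}+1/2\bigr)$.
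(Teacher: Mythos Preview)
The paper does not prove this theorem at all: it is quoted from \cite{ERS:2012:PEF} and used only as a black box in the complexity estimates of Section~\ref{s3}. There is therefore no proof in the present paper to compare your proposal against.

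Independently of that, your sketch has two concrete problems worth flagging. First, the assertion that ``the condition $m\geq s$ places $\mathbb{F}_{q^s}$ inside $\mathbb{F}_{q^m}$'' is false in general; the containment $\mathbb{F}_{q^s}\subseteq\mathbb{F}_{q^m}$ holds exactly when $s\mid m$, so your basis-extension step is not justified under the stated hypothesis. Second, the method in \cite{ERS:2012:PEF} is called \emph{automorphic} evaluation because its core device is the Frobenius map $x\mapsto x^{q}$, which in a normal basis of $\mathbb{F}_{q^m}/\mathbb{F}_q$ is a cyclic coordinate shift and hence costs no $\mathbb{F}_q$-products. Your decomposition instead passes through $\gamma=\beta^{\,q-1}$, which carries no special structure when $\beta\in\mathbb{F}_{q^m}$ with $m>1$; the Paterson--Stockmeyer steps you describe are then genuine $\mathbb{F}_{q^m}\times\mathbb{F}_{q^m}$ multiplications, and the claim that an $\mathbb{F}_{q^s}\times\mathbb{F}_{q^m}$ product costs exactly $s$ products in $\mathbb{F}_q$ is not correct either. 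Without the Frobenius-is-free ingredient there is no mechanism in your outline that makes the final count independent of $m$, so the route you propose cannot reach the stated bound $2s\bigl(\sqrt{n(q-1)}+1/2\bigr)$ in general.
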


\section{\texorpdfstring{An attack based on traces over finite extensions of
$\mathbb{F}_q$}{An attack based on traces over finite extensions of
Fq}}\label{s3}

One of the first attacks on PLWE (and on RLWE whenever they are equivalent), is
that described in~\cite{ELOS:2016:RCN} and~\cite{ELOS:2015:PWI}, which for a
quotient ring $R_q=\mathbb{F}_q[x]/(f(x))$ and a~prime $q$, are applicable and
successful whenever it exists a~simple root $\alpha\in\mathbb{F}_q$ such that
\begin{enumerate}
\item[a)] $\alpha=1$, or
\item[b)] $\alpha$ has \emph{small order} modulo $q$, or
\item[c)] $\alpha$ has \emph{small residue} modulo $q$.
\end{enumerate}
By \emph{small order} the authors understand orders of up to $5$, while by
\emph{small residue}, they mean $\alpha=2$ or $\alpha=3$ (modulo $q$).

Using the Chinese remainder theorem, we express
\[
R_q\simeq \mathbb{F}_q[x]/(x-\alpha)\times\mathbb{F}_q[x]/(h(x)),
\]
where $h(x)$ is coprime to $x-\alpha$. We obtain hence a~ring homomorphism
\[
\psi_\alpha\colon R_q\rightarrow\mathbb{F}_q[x]/(x-\alpha)\simeq\mathbb{F}_q,
\]
which is nothing else than the evaluation-at-$\alpha$ map, namely,
$\psi_\alpha(g(x)) = g(\alpha)$,
for each $g(x)\in R_q$.

Let $n$ be the degree of $f(x)$. Next, we describe two of the three attacks
presented in~\cite{ELOS:2016:RCN}, namely, those corresponding to the cases a)
and b) above. Let us suppose, to start with, that $\alpha=1$ is a~root of
$f(x)$. For a
PLWE sample $(a(x),b(x)=a(x)s(x)+e(x))$, the error term,
$e(x)=\sum_{i=0}^{n-1}e_ix^i$, has its coefficients
$e_i\in\mathbb{F}_q$ sampled from a~discrete Gaussian with small
enough standard deviation $\sigma$ (the authors set $\sigma\cong 8$, as
suggested by applications). For an element $s\in\mathbb{F}_q$, writing $s=s(1)$
and
applying the evaluation map, we have
\[
b(1)-a(1)s=e(1)=\sum_{i=0}^{n-1}e_i,
\]
and the sum $\sum_{i=0}^{n-1}e_i$ is hence sampled from a~discrete
Gaussian variable of standard deviation $\sqrt{n}\sigma$, which according to
practical specifications, is of order $O(q^{1/4})$.

For a~right guess $s=s(1)$, the value $b(1)-a(1)s$ will belong to the set of
integers $[-2\sqrt{n}\sigma, 2\sqrt{n}\sigma]\cap\mathbb{Z}$ (which can be
easily enumerated) with probability about $0.95$. Hence, we will refer to the
set $[-2\sqrt{n}\sigma, 2\sqrt{n}\sigma]\cap\mathbb{Z}$ as the \emph{smallness
region} for this attack.

Case b) is more subtle. Indeed, let $r$ denote the multiplicative order of a
root $\alpha\in\mathbb{F}_q$, such that $\alpha\ne 1$. Given a~PLWE sample
$(a(x),b(x)=a(x)s(x)+e(x))$, for $s=s(\alpha)\in\mathbb{F}_q$, we have
\[
b(\alpha)-a(\alpha)s=e(\alpha) = \sum_{j=0}^{r-1}\sum_{i=0}^{\frac{n}{r}-1}
e_{ir+j}\alpha^j,
\]
assuming without loss of generality that $r\mid n$.

The elements $e_j = \sum_{i=0}^{n/r-1}e_{ir+j}$ can be regarded as sampled from
a Gaussian
distribution of $0$ mean and standard deviation $\sqrt{n/r}\sigma$, and thus
they
belong to the set of integers $[-2\sqrt{n/r}\sigma,
2\sqrt{n/r}\sigma]\cap\mathbb{Z}$ with probability $0.95$. This leads us to
consider in this case the \emph{smallness region} as the set $\Sigma$ of all
possible values for $e(\alpha)$, which can be precomputed and stored in a
look-up table. Notice that
\[
|\Sigma| \leq \left(4\sqrt{n/r}\sigma + 1\right)^r.
\]

\begin{figure}[ht]
\centering
\begin{tabular}[c]
{ll}
\hline
\textbf{Input:} & A collection of samples $C=\{(a_i(x),b_i(x))\}
_{i=1}^M \subseteq R_q^2$ \\
 & A look-up table $\Sigma$ of all possible values for $e(\alpha)$ \\
\textbf{Output:} & A guess $g\in\mathbb{F}_q$ for $s(\alpha)$,\\
 & or \textbf{NOT PLWE},\\
 & or \textbf{NOT ENOUGH SAMPLES}\\
\hline
\end{tabular}

\begin{itemize}
\item \texttt{\emph{set}} $S:=\mathbb{F}_q$
\item \texttt{\emph{set}} $G:=\emptyset$
\item \texttt{\emph{for}} $g\in S$ \texttt{\emph{do}}
	
\begin{itemize}
	\item \texttt{\emph{for}} $(a_i(x),b_i(x))\in C$ \texttt{\emph{do}}
		
\begin{itemize}
		\item \texttt{\emph{if}} $b_i(\alpha)-a_i(\alpha)g\notin \Sigma$
\texttt{\emph{then}}
			
\begin{itemize}
			\item \texttt{\emph{next}} $g$
			
\end{itemize}
		
\end{itemize}
	\item \texttt{\emph{set}} $G:=G\cup\{g\}$
	
\end{itemize}
\item \texttt{\emph{if}} $G=\emptyset$ \texttt{\emph{then return}} \textbf{NOT
PLWE}
\item \texttt{\emph{if}} $G=\{g\}$ \texttt{\emph{then return}} $g$
\item \texttt{\emph{if}} $|G|>1$ \texttt{\emph{then return}} \textbf{NOT ENOUGH
SAMPLES}
\end{itemize}
\hrule
\caption{Algorithm solving PLWE decision problem}
\label{alg1}
\end{figure}
In~\cite{ELOS:2015:PWI}, these ideas are presented and converted into
Algorithm~\ref{alg1}, whose
probability of success is also derived therein:

\begin{proposition}[{\cite[Proposition~3.1]{ELOS:2015:PWI}}]
 Assume $|\Sigma|<q$. If
Algorithm~\ref{alg1}
returns \textbf{NOT PLWE}, then the samples come from the uniform distribution.
If it outputs anything other than \textbf{NOT PLWE}, then the samples are valid
PLWE samples with probability given by $1-\left(|\Sigma|/q\right)^M$. In particular,
this probability tends to 1 as $M$ grows.
\label{pralg1}
\end{proposition}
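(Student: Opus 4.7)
The plan is to prove the two assertions separately, extracting the key probabilistic facts from the definition of the smallness region $\Sigma$ and from Lemma~\ref{unifl}.

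First I would handle the NOT PLWE direction. The defining property of $\Sigma$ is that it contains every possible value of $e(\alpha)$ when $e$ is drawn from the prescribed noise distribution. Hence for genuine PLWE samples with secret $s(x)$, setting $g^{\star}:=s(\alpha)$ one has $b_i(\alpha)-a_i(\alpha)g^{\star}=e_i(\alpha)\in\Sigma$ for every $i=1,\dotsc,M$, so that $g^{\star}\in G$ and in particular $G\ne\emptyset$. The algorithm therefore cannot output \textbf{NOT PLWE} on PLWE samples, and by contrapositive any \textbf{NOT PLWE} output forces the samples to be uniform in the decisional framework.

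Next I would establish the quantitative estimate for the second assertion. Assume the samples are uniform in $R_q^2$. Fix any candidate $g\in\mathbb{F}_q$ and consider $X_i:=b_i(\alpha)-a_i(\alpha)g$. Since $b_i$ is uniform in $R_q$, its coefficients are independent uniform variables over $\mathbb{F}_q$; because $\alpha\in\mathbb{F}_q\setminus\{0\}$ in the setting of cases (a) and (b), evaluation at $\alpha$ is a nontrivial $\mathbb{F}_q$-linear combination of these coefficients, and Lemma~\ref{unifl} gives that $b_i(\alpha)$ is uniform in $\mathbb{F}_q$. The further affine shift by $-a_i(\alpha)g$ preserves uniformity, so $\Pr[X_i\in\Sigma]=|\Sigma|/q$; independence of the $M$ samples yields $\Pr[g\in G]=(|\Sigma|/q)^M$. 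Interpreting this as a statement about the algorithm's output, for any specific guess $g$ the algorithm returns, the probability that this guess is a false positive generated by uniform samples is $(|\Sigma|/q)^M$; the complementary quantity $1-(|\Sigma|/q)^M$ therefore measures the confidence that the samples are genuinely PLWE, and since $|\Sigma|<q$ by hypothesis, it tends to $1$ as $M$ grows.

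The main obstacle I foresee is conceptual rather than technical: being precise about the defining property of $\Sigma$. If $\Sigma$ is taken to be the \emph{full} image of the error distribution under evaluation at $\alpha$, the first step is tautological; however, the heuristic discussion preceding Algorithm~\ref{alg1} describes a $2\sigma$-box, which only envelops the noise with probability about $0.95$ per sample. A rigorous proof therefore either enlarges $\Sigma$ to cover the true support, or explicitly absorbs a small completeness loss into the stated bound. Once this is settled, the remaining ingredients are immediate: Lemma~\ref{unifl} for the uniformity of $X_i$, and the independence of the $M$ samples for the product form $(|\Sigma|/q)^M$.
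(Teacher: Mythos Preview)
The paper does not supply its own proof of this proposition: it is stated with a direct citation to \cite[Proposition~3.1]{ELOS:2015:PWI} and no proof environment follows in the text. There is therefore nothing in the present paper to compare your argument against.

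On its own merits, your reconstruction is the standard one and is essentially what the cited source does: the \textbf{NOT PLWE} direction follows because the true $g^\star=s(\alpha)$ always survives when the samples are PLWE, and the quantitative bound in the other direction comes from the uniformity of $b_i(\alpha)-a_i(\alpha)g$ (via Lemma~\ref{unifl}) together with independence across the $M$ samples. You have also correctly isolated the only genuine subtlety, namely whether $\Sigma$ is taken to be the full support of $e(\alpha)$ or merely a high-probability $2\sigma$-box; the paper's surrounding discussion uses the latter, so strictly speaking the first implication holds only up to a small completeness error that the proposition, as stated, silently absorbs.
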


\begin{remark}
Notice that the cyclotomic polynomial $\Phi_n(x)$ is protected
against these attacks. Indeed, $\alpha=1$ is never a~root modulo $q\neq p$.
Moreover, for $q\equiv 1\pmod{n}$ the order of each of the $m$ different roots
of $\Phi_n(x)$ is precisely $n$.
\end{remark}

\subsection{Our method. Preliminary facts}\label{ss1} In this section we present an attack
against a~variant of the PLWE problem for $\Phi_m(x)$ and for non totally-split
primes $q$ by using roots
of $\Phi_m(x)$ over finite degree extensions of $\mathbb{F}_q$. To brief
notation, let $m:=p^n$, $N=\phi(m)$ and $R_q:=\mathbb{F}_q[x]/(\Phi_m(x))$. We
will assume, as in Theorem~\ref{chinese},
that $q=1+p^A u$, with $A\geq 1$, $(u,p)=1$ and that $n>A$.

Our attack starts with a~primitive $p^A$-th root $\rho$ of unity modulo $q$, for
which we take $\alpha\in\mathbb{F}_{q^{p^{n-A}}}\setminus\mathbb{F}_q$, a
$p^{n-A}$-th root of $\rho$. Due to Theorem~\ref{chinese} we have
$Tr(\alpha)=0$,
where $Tr$ stands for the trace of $\mathbb{F}_{q^{p^{n-A}}}$ over
$\mathbb{F}_q$.

Now, if $(a(x),b(x)=a(x)s(x)+e(x))\in R_q^2$ is a~PLWE sample attached to a
secret $s(x)$ and an error term $e(x)=\sum_{i=0}^{N-1}e_ix^i$,
then
\[
b(\alpha)-a(\alpha)s=e(\alpha),
\]
with $s:=s(\alpha)\in \mathbb{F}_{q^{p^{n-A}}}$ and
\begin{equation}
Tr(b(\alpha)-a(\alpha)s)=Tr(e(\alpha))=\sum_{i=0}^{N-1}e_it_i,
\label{partida}
\end{equation}
where $t_i=Tr(\alpha^i)$.

If $(i,p)=1$ then $t_i=0$ since $\alpha$ is a~root of $x^{p^{n-A}}-\rho$ and
$ord(\alpha^i)=ord(\alpha)=m$. More in general, we will make use of the
following
\begin{lemma}
Notations as before, for $i=p^k v$ with $(v,p)=1$ and $0\leq k< n-A$,
then $t_i=0$.
\label{lemtr}
\end{lemma}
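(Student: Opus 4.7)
The plan is to identify the minimal polynomial of $\alpha^i$ over $\mathbb{F}_q$ explicitly, and then read off its trace from the coefficient of the sub-leading monomial. Concretely, write $i = p^k v$ with $(v,p)=1$ and $0 \leq k < n-A$. Since $\alpha^{p^{n-A}} = \rho$, we compute
\[
\bigl(\alpha^{p^k v}\bigr)^{p^{n-k-A}} = \alpha^{p^{n-A} v} = \rho^v,
\]
so $\alpha^i$ is a root of $x^{p^{n-k-A}} - \rho^v$. By Corollary~\ref{lemairr} (applied with the same $\rho$ and the exponent $v$ coprime to $p$), this polynomial is irreducible over $\mathbb{F}_q$, hence it is the minimal polynomial of $\alpha^i$ over $\mathbb{F}_q$.

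Next, I would apply transitivity of the trace through the tower $\mathbb{F}_q \subseteq \mathbb{F}_q(\alpha^i) \subseteq \mathbb{F}_{q^{p^{n-A}}}$. Setting $L := \mathbb{F}_q(\alpha^i)$, the degree of $L/\mathbb{F}_q$ is $p^{n-k-A}$ by the previous paragraph, and consequently $[\mathbb{F}_{q^{p^{n-A}}} : L] = p^k$. Transitivity yields
\[
t_i = \mathrm{Tr}(\alpha^i) = p^k \cdot \mathrm{Tr}_{L/\mathbb{F}_q}(\alpha^i).
\]

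Finally, $\mathrm{Tr}_{L/\mathbb{F}_q}(\alpha^i)$ is, up to sign, the coefficient of $x^{p^{n-k-A}-1}$ in the minimal polynomial $x^{p^{n-k-A}} - \rho^v$. Because $k < n-A$, the exponent $p^{n-k-A}$ is at least $p \geq 2$, so the sub-leading coefficient is simply $0$. This gives $\mathrm{Tr}_{L/\mathbb{F}_q}(\alpha^i) = 0$ and therefore $t_i = 0$.

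The only subtlety I foresee is confirming that Corollary~\ref{lemairr} genuinely applies to the exponent $v$ appearing here (i.e., that $\rho^v$ remains a primitive $p^A$-th root of unity, which holds since $(v,p)=1$), and keeping the bookkeeping of the exponents $n-A$ versus $n-k-A$ straight in the two applications of $\alpha^{p^{n-A}} = \rho$. Once this is in place, the argument is essentially a routine trace computation.
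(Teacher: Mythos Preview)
Your proof is correct and follows essentially the same approach as the paper: identify $x^{p^{n-k-A}}-\rho^v$ as the minimal polynomial of $\alpha^i$ via Corollary~\ref{lemairr}, then conclude the trace vanishes. The paper simply states ``Hence $Tr(\alpha^{p^k v})=0$'' after establishing irreducibility, whereas you spell out the transitivity-of-trace step through the intermediate field $L=\mathbb{F}_q(\alpha^i)$ and explicitly read off the vanishing sub-leading coefficient; this is a welcome elaboration of what the paper leaves implicit, not a different argument.
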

\begin{proof}
For $i=p^k v$ with $(v,p)=1$ and $0\leq k< n-A$, the element $\alpha^{p^k v}$ is a
root of the polynomial $x^{p^{n-k-A}}-\rho^v$ and since $\rho^v$ is also a
primitive $p^A$-th root of unity, this polynomial is irreducible according to
Corollary~\ref{lemairr}. Hence $Tr(\alpha^{p^k v})=0$.
\end{proof}
Applying Lemma~\ref{lemtr} to the right hand side of Equation~\ref{partida}
we are
left with
\begin{equation}
Tr(b(\alpha)-a(\alpha)s)=p^{n-A}\sum_{j=0}^{p^{A-1}(p-1)-1}e_{jp^{n-A}}\rho^j.
\label{traceerror}
\end{equation}
But, again, the coefficients $e_{jp^{n-A}}$ are sampled from a~discrete Gaussian
$N\left(0, \sigma^2 \right)$ and we can list those elements which occur with
probability beyond $0.95$, namely, the integer values in the interval
$[-2\sigma, 2\sigma]$.

From now on we will suppose that $A=2$ and $\sigma=8$ so that in $[-2\sigma,
2\sigma]$ there are $32$ integers. We can construct a~look-up table where the
expression~\ref{traceerror} takes on values with large probability, namely, the
\emph{smallness region}, $\Sigma$. Observe that
\begin{equation}
|\Sigma| \leq \left(4\sigma + 1\right)^{p(p-1)}.
\label{lookup}
\end{equation}
To construct $\Sigma$ requires $32^{p(p-1)}$ multiplications in $\mathbb{F}_q$,
which is feasible for not very large values of $p$.

\subsection{The trace map} In order to compute the trace of an element
$\theta\in\mathbb{F}_{q^{p^{n-2}}}$, we can proceed by fixing an
$\mathbb{F}_q$-basis of $\mathbb{F}_{q^{p^{n-2}}}$. For instance, we will
stick to the power-basis $\{1,\alpha,\dotsc,\alpha^{p^{n-2}-1}\}$. Now we identify
$\mathbb{F}_{q^{p^{n-2}}}\cong \mathbb{F}_{q}^{p^{n-2}}$ and we can write
\[
\theta=\sum_{i=0}^{p^{n-2}-1}a_i\alpha^i,
\]
where $a_i\in\mathbb{F}_q$ and $\alpha^{p^{n-2}}=\rho$, our chosen $p^2$-th root
of unity in $\mathbb{F}_q$. Taking trace, which is an $\mathbb{F}_q$-linear map,
we have, as explained in Subsection~\ref{ss1}: %%3.1:
\begin{equation}
Tr(\theta)=\sum_{i=0}^{p^{n-2}-1}a_iTr(\alpha^i)=p^{n-2}a_0.
\label{eqtrace}
\end{equation}
A first tentative approach to exploit a~root $\alpha\in\mathbb{F}_{q^{p^{n-2}}}$
for an attack would be to run over the elements $s$ in this field as putative
guesses for $s(\alpha)$ and to decide for each sample $(a(x),b(x))\in R_q^2$
whether $Tr(b(\alpha)-a(\alpha)s)$ belongs or not to the \emph{smallness region
}$\Sigma$. One would need to evaluate $Tr(b(\alpha))$ and $Tr(a(\alpha)s)$ for
each $s\in\mathbb{F}_q^{p^{n-2}}$. Leaving aside that running through all the
elements of this large field is definitely unfeasible, we can evaluate
$Tr(b(\alpha))$, which is independent of $s$, by applying Lemma~\ref{lemtr}:
\[
Tr(b(\alpha))=p^{n-2}\sum_{j=0}^{p(p-1)-1}b_{jp^{n-2}}\rho^j.
\]
Hence, evaluating $Tr(b(\alpha))$ takes about $2\sqrt{p(p-1)(q-1)}$
$\mathbb{F}_q$-products.

As for $Tr(a(\alpha)s)$, notice that the map $s\mapsto
T_{a(\alpha)}(s):=Tr(a(\alpha)s)$ is also $\mathbb{F}_q$-linear, hence
identifying $s\in\mathbb{F}_{q^{p^{n-2}}}$ with its coordinates
$(s_0,s_1,\dotsc,s_{p^{n-2}-1})$, we can write:
\begin{equation}
T_{a(\alpha)}(s)=Tr\left(\sum_{i=0}^{N-1}a_i\alpha^i \sum_{j=0}^{p^{n-2}-1}
s_j\alpha^j \right).
\label{midtrace1}
\end{equation}
Since $0\leq i\leq N-1$ and $0\leq j\leq p^{n-2}-1$, the terms for which the
trace do not vanish are those of the form $a_ix_j\alpha^{i+j}$ with
$i+j=vp^{n-2}$, with $0\leq v\leq p(p-1)$. Namely:
\begin{equation}
T_{a(\alpha)}(s)=p^{n-2}a_0s_0+p^{n-2}\sum_{v=1}^{p(p-1)}\left(\sum_{j=0}^{p^{
n-2}-1}s_ja_{vp^{n-2}-j}\right)\rho^v.
\label{astrace}
\end{equation}
Since for each $0\leq j\leq p^{n-2}-1$ we have to evaluate a~polynomial of
degree $p(p-1)$ over $\mathbb{F}_q$, which takes about $2\sqrt{p(p-1)(q-1)}$,
evaluating $Tr(a(\alpha)s)$ takes $2p^{n-2}\sqrt{p(p-1)(q-1)}$ per sample.
However, as we can see, the expression for the trace in Equation~\ref{astrace}
is
rather complicated and computationally far from optimal, specially if we have to
perform it for each sample and for each guess. For this reason, our attack is
restricted to samples $(a(x),b(x))\in R_q^2$ whose left component belong to a
subring $R_{q,0}$ which has large dimension.

\subsection{A distinguished subspace} Instead of in $R_q^2$, we consider samples
in $R_{q,0}\times R_q$ where
\[
R_{q,0}=\{p(x)\in R_q: p(\alpha)\in\mathbb{F}_q\}.
\]
\begin{proposition}
The set $R_{q,0}$ is a~subring of $R_q$ and an $\mathbb{F}_q$-vector
subspace of $R_q$ of
dimension $p^{n-1}(p-1)-p^{n-2}+1$.
\label{dim0}
\end{proposition}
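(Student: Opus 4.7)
The plan is to realise $R_{q,0}$ as the preimage of $\mathbb{F}_q$ under the evaluation-at-$\alpha$ map, and then to compute its dimension via the rank-nullity theorem once we know that this evaluation map is surjective onto $\mathbb{F}_{q^{p^{n-2}}}$.

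First I would introduce the evaluation homomorphism
$\mathrm{ev}_\alpha\colon R_q\to \mathbb{F}_{q^{p^{n-2}}}$, $g(x)\mapsto g(\alpha)$. This is well defined as a ring homomorphism because $\alpha$ is, by construction, a root of the irreducible factor $x^{p^{n-2}}-\rho$ of $\Phi_{p^n}(x)$ provided by Theorem~\ref{chinese}, so in particular $\Phi_{p^n}(\alpha)=0$ in $\mathbb{F}_{q^{p^{n-2}}}$. Since by construction $R_{q,0}=\mathrm{ev}_\alpha^{-1}(\mathbb{F}_q)$, and $\mathbb{F}_q$ is a subring (and $\mathbb{F}_q$-subspace) of $\mathbb{F}_{q^{p^{n-2}}}$, $R_{q,0}$ is automatically a subring of $R_q$ and an $\mathbb{F}_q$-vector subspace. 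This disposes of the first claim with no further work.

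For the dimension, the key step is to observe that $\mathrm{ev}_\alpha$ is surjective. Its image contains $\mathbb{F}_q$ and $\alpha$, hence contains $\mathbb{F}_q[\alpha]$; but $x^{p^{n-2}}-\rho$ is the minimal polynomial of $\alpha$ over $\mathbb{F}_q$ (Theorem~\ref{chinese} together with Corollary~\ref{lemairr}), so $[\mathbb{F}_q(\alpha):\mathbb{F}_q]=p^{n-2}$, forcing $\mathbb{F}_q[\alpha]=\mathbb{F}_{q^{p^{n-2}}}$. Rank-nullity applied to the $\mathbb{F}_q$-linear map $\mathrm{ev}_\alpha$ then yields
\[
\dim_{\mathbb{F}_q}\ker\mathrm{ev}_\alpha=\dim_{\mathbb{F}_q} R_q-\dim_{\mathbb{F}_q}\mathbb{F}_{q^{p^{n-2}}}=p^{n-1}(p-1)-p^{n-2}.
\]
Finally, since $R_{q,0}=\mathrm{ev}_\alpha^{-1}(\mathbb{F}_q)$ and $\mathrm{ev}_\alpha$ is surjective, the standard formula for preimages of subspaces under surjective linear maps gives
\[
\dim_{\mathbb{F}_q} R_{q,0}=\dim_{\mathbb{F}_q}\mathbb{F}_q+\dim_{\mathbb{F}_q}\ker\mathrm{ev}_\alpha=1+p^{n-1}(p-1)-p^{n-2},
\]
which is the claimed dimension. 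There is no real obstacle here; the only point deserving a moment of care is the surjectivity of $\mathrm{ev}_\alpha$, which hinges on the irreducibility of $x^{p^{n-2}}-\rho$ over $\mathbb{F}_q$ already established in the previous subsection. Alternatively, one could prove the dimension formula directly by writing $i=qp^{n-2}+r$ with $0\leq r<p^{n-2}$, using $\alpha^{p^{n-2}}=\rho\in\mathbb{F}_q$ to expand $p(\alpha)$ in the power basis $\{1,\alpha,\dots,\alpha^{p^{n-2}-1}\}$, and observing that the condition $p(\alpha)\in\mathbb{F}_q$ imposes exactly $p^{n-2}-1$ linearly independent equations on the coefficients of $p(x)$, one for each $r=1,\dots,p^{n-2}-1$; the independence is immediate because these equations involve disjoint blocks of coefficients.
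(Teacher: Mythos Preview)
Your argument is correct. The main route you take---observing that $R_{q,0}=\mathrm{ev}_\alpha^{-1}(\mathbb{F}_q)$, checking surjectivity of $\mathrm{ev}_\alpha$ onto $\mathbb{F}_{q^{p^{n-2}}}$, and reading off the dimension from rank--nullity---is a genuinely different (and cleaner) approach than the paper's. The paper does precisely what you sketch as your ``alternative'': it writes each index $i$ as $vp^{n-2}+j$ with $0\leq j<p^{n-2}$, uses $\alpha^{p^{n-2}}=\rho$ to expand
\[
p(\alpha)=\sum_{j=0}^{p^{n-2}-1}\Bigl(\sum_{v=0}^{p(p-1)-1}p_{vp^{n-2}+j}\rho^v\Bigr)\alpha^j,
\]
and then counts the $p^{n-2}-1$ independent linear conditions coming from $j=1,\dots,p^{n-2}-1$. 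Your rank--nullity argument is more conceptual and avoids any explicit bookkeeping; the paper's explicit expansion, on the other hand, pays a dividend later, since the concrete description of those $p^{n-2}-1$ equations (disjoint blocks of coefficients satisfying $\sum_v a_{vp^{n-2}+j}\rho^v=0$) is reused in the proof of Lemma~\ref{unifprima}.
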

\begin{proof}
It is obvious that $R_{q,0}$ is an $\mathbb{F}_q$-vector subspace
and a~subring of $R_q$. As for the dimension, notice that for
$p(x)=\sum_{i=0}^{N-1}p_ix^i$, we have, by dividing each index $i$
by $p^{n-2}$:
\[
p(\alpha)=\sum_{j=0}^{p^{n-2}-1}\left(\sum_{v=0}^{p(p-1)-1}p_{vp^{n-2}+j}
\rho^v \right)\alpha^j,
\]
hence $p(\alpha)\in\mathbb{F}_q$ if and only if
$\sum_{v=0}^{p(p-1)-1}x_{vp^{n-2}+j}\rho^v =0$ for each $0< j\leq
p^{n-2}-1$. These are $p^{n-2}-1$ linearly independent equations, hence the
result follows.
\end{proof}
\begin{remark}
Observe that, for $a(x)\in R_{q,0}$, it holds that
\[
Tr(a(\alpha)s)
=a(\alpha)Tr(s)
=p^{n-2}a(\alpha)s_0,
\]
which requires only two
$\mathbb{F}_q$-multiplications to compute.
\end{remark}

\subsection{An attack on $R_{q,0} \times R_q$-PLWE} Denote
$S:=\mathbb{F}_{q^{p^{n-2}}}$
and assume that we are given a~set of samples from $R_{q,0}\times R_q$.
The goal is to distinguish whether these samples come from the $R_{q,0}$-PLWE
distribution or from a~uniform distribution with values in $R_{q,0}\times R_q$.
To that end, given a~sample $(a_i(x),b_i(x))$, we pick a~guess $s\in S$ for
$s(\alpha)$ and check whether
$e_{i}:=\frac{1}{p^{n-2}}Tr(b_i(\alpha)-a_i(\alpha)s)$ belongs to the
look-up table $\Sigma$. % defined in Equation~\ref{lookup}.
If this is not the
case, we
can safely remove from $S$ not only $s$, but also all the elements $t\in
\mathbb{F}_{q^{p^{n-2}}}$ with the same trace as $s$. But notice that if
$s=\sum_{j=0}^{p^{n-2}-1}s_j\alpha^j$, then an element
$t=\sum_{j=0}^{p^{n-2}-1}t_j\alpha^j$ has the same trace as $s$
if and only if $t_0=s_0$. Hence, given an $s\in S$, if we find a
sample $(a_i(x),b_i(x))$ for which $e_i\notin\Sigma$, then we can delete
$q^{p^{n-2}-1}$ elements of $S$.

Since $a(\alpha)\in\mathbb{F}_q$, then
\[
\frac{1}{p^{n-2}}Tr(b(\alpha)-a(\alpha)s)=\frac{1}{p^{n-2}}Tr(b(\alpha))-\frac{1
}{p^{n-2}}a(\alpha)Tr(s).
\]
Therefore, it is enough just to check, for each $g\in\mathbb{F}_q$ (so that
$g$ is a~putative value for $Tr(s)$), whether or not
\[
\frac{1}{p^{n-2}}Tr(b(\alpha))-\frac{1}{p^{n-2}}a(\alpha)g\in\Sigma.
\]
This is at the price that if the algorithm returns just an element
$g\in\mathbb{F}_q$, we should understand that this element is just the trace of
one of the $q^{p^{n-2}-1}$ possible guesses for $s(\alpha)$. However, this is
(even if weaker than Algorithm~\ref{alg1}) enough as a~decision attack.

\begin{figure}[ht]
\centering
\begin{tabular}[c]
{ll}
\hline
\textbf{Input:} & A set of samples $C=\{(a_i(x),b_i(x))\}_{i=1}^M \in
R_{q,0}\times R_q$ \\
 & A look-up table $\Sigma$ of all possible values for $Tr(e(\alpha))$ \\
\textbf{Output:} & \textbf{PLWE},\\
                 & or \textbf{NOT PLWE},\\
                 & or \textbf{NOT ENOUGH SAMPLES}\\
\hline
\end{tabular}

\begin{itemize}
\item \texttt{\emph{set}} $G:=\emptyset$
\item \texttt{\emph{for}} $g\in \mathbb{F}_q$ \texttt{\emph{do}}
	
\begin{itemize}
	\item \texttt{\emph{for}} $(a_i(x),b_i(x))\in C$ \texttt{\emph{do}}
		
\begin{itemize}
		\item \texttt{\emph{if}} $\frac{1}{p^{n-2}}\left(Tr(b(\alpha))
                                -a(\alpha)g\right)\notin\Sigma$
\texttt{\emph{then}}
			
\begin{itemize}
			\item \texttt{\emph{next}} $g$
			
\end{itemize}
		
\end{itemize}
	\item \texttt{\emph{set}} $G:=G\cup\{g\}$
	
\end{itemize}
\item \texttt{\emph{if}} $G=\emptyset$ \texttt{\emph{then return}} \textbf{NOT
PLWE}
\item \texttt{\emph{if}} $|G|=1$ \texttt{\emph{then return}} \textbf{PLWE}
\item \texttt{\emph{if}} $|G|>1$ \texttt{\emph{then return}} \textbf{NOT ENOUGH
SAMPLES}
\end{itemize}
\hrule
\caption{Decision attack against $R_{q,0}$-PLWE}
\label{alg2}
\end{figure}

Observe that if $|G|=1$, say $G=\{g\}$, unlike
Algorithm~\ref{alg1}, our Algorithm~\ref{alg2} does not output a~guess for
$s(\alpha)$;
all we can only suspect is that there likely exists $\tilde{s}\in
\mathbb{F}_{q^{n-2}}$ such that $Tr(\tilde{s})=g$ with $s(\alpha)=\tilde{s}$.

Next, we evaluate the complexity of our attack in terms of
$\mathbb{F}_q$-multiplications:
\begin{proposition}
Given $M$ samples in $R_{q,0}\times R_q$, the number of
$\mathbb{F}_q$-multiplications required for Algorithm~\ref{alg2} is, at worst,
of
order $\mathcal{O}(\sqrt{p(p-1)(q-1)}Mq)$.
\end{proposition}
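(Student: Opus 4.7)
The plan is to analyse the nested loops of Algorithm~\ref{alg2} and bound the per-iteration cost, exploiting the fact that the two quantities being computed are in reality polynomials of much smaller degree than the naive $N-1=\phi(p^n)-1$. The outer loop ranges over the $q$ elements of $\mathbb{F}_q$ and the inner loop over the $M$ samples, so there are at most $Mq$ iterations; I will show that each costs $\mathcal{O}(\sqrt{p(p-1)(q-1)})$ $\mathbb{F}_q$-multiplications, and the claim follows by multiplication.

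First, I would use Lemma~\ref{lemtr}, exactly as in the derivation of Equation~\ref{traceerror}, to rewrite
\[
Tr(b_i(\alpha))=p^{n-2}\sum_{j=0}^{p(p-1)-1}b_{i,jp^{n-2}}\rho^j,
\]
and I would invoke the formula obtained in the proof of Proposition~\ref{dim0} together with the constraint $a_i(x)\in R_{q,0}$, which forces all but the $j=0$ block of coefficients to vanish, to reduce $a_i(\alpha)$ to
\[
\sum_{v=0}^{p(p-1)-1}a_{i,vp^{n-2}}\rho^v.
\]
This is the crucial observation: both evaluations reduce to polynomials over $\mathbb{F}_q$ of degree only $p(p-1)-1$ (not the generic $N-1$) evaluated at the single element $\rho\in\mathbb{F}_q$.

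Next, I would apply Theorem~\ref{auto} with $s=m=1$ and polynomial degree $p(p-1)-1$ to each of these two evaluations, obtaining the bound $\mathcal{O}(\sqrt{p(p-1)(q-1)})$ per evaluation. The remaining per-iteration operations --- the product $a_i(\alpha)g$, the subtraction, the rescaling by $p^{-(n-2)}$, and the membership query against $\Sigma$ (stored in an associative structure during the precomputation discussed around Equation~\ref{lookup}) --- contribute only $\mathcal{O}(1)$ additional multiplications. Multiplying the per-iteration cost by $Mq$ yields the claimed bound.

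The main obstacle I foresee is interpreting the qualifier ``at worst'' consistently. Both $Tr(b_i(\alpha))$ and $a_i(\alpha)$ are independent of the guess $g$, so in an optimised implementation they would be computed once per sample during a preprocessing pass, lowering the true cost to $\mathcal{O}(M(q+\sqrt{p(p-1)(q-1)}))$; similarly, the \texttt{next g} instruction short-circuits the inner loop whenever a discrepancy is detected. The proposition records the naive worst case in which every one of the $Mq$ pairs independently pays the full evaluation cost, and that is what the proof has to bound.
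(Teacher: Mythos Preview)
Your proof is correct and follows essentially the same route as the paper: reduce $Tr(b_i(\alpha))$ (and, in your case, also $a_i(\alpha)$) to evaluations of degree-$p(p-1)-1$ polynomials at $\rho\in\mathbb{F}_q$, apply Theorem~\ref{auto} for the $\mathcal{O}(\sqrt{p(p-1)(q-1)})$ per-evaluation cost, and multiply by the $Mq$ iterations of the nested loops. The only difference is that the paper's proof folds the cost of $a_i(\alpha)g$ into a ``$+2$'' constant (relying on the remark preceding the proposition) rather than separately bounding the evaluation of $a_i(\alpha)$ as you do, but this does not affect the asymptotic and your treatment is, if anything, slightly more explicit.
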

\begin{proof}
To begin with, given $g\in \mathbb{F}_q$:
\begin{itemize}
\item For each sample $(a_i(x),b_i(x))$, evaluating $Tr(b_i(\alpha))$, by
automorphic evaluation requires $2\sqrt{p(p-1)(q-1)}$ multiplications in
$\mathbb{F}_q$. Therefore checking whether the element
$\frac{1}{p^{n-2}} Tr(b(\alpha)-a(\alpha)g)$ is in $\Sigma$ requires
$2\sqrt{p(p-1)(q-1)}+2$ multiplications in~$\mathbb{F}_q$.
\item In the worst case, the condition will fail for all the samples, in
which case we will perform $(2\sqrt{p(p-1)(q-1)}+2)M$ multiplications in
$\mathbb{F}_q$ for each $g\in\mathbb{F}_q$.
\end{itemize}
Since the previous steps must be performed for every $g\in\mathbb{F}_q$, the
number of multiplications for the worst case will be
$(2\sqrt{p(p-1)(q-1)}+2)Mq$.
\end{proof}

To derive the success probability of our attack we will make use of the
following:
\begin{remark}
Given an input sample $(a(x),b(x))\in R_{q,0}\times R_q$ for
Algorithm~\ref{alg2}, given
$g\in\mathbb{F}_q$ such that $a(x)=\sum_{j=0}^{p^{n-1}(p-1)-1}a_jx^j$ and
$b(x)=\sum_{j=0}^{p^{n-1}(p-1)-1}b_jx^j$, one can notice that
checking whether
$\frac{1}{p^{n-2}}Tr(b(\alpha))-\frac{1}{p^{n-2}}Tr(a(\alpha))g\in\Sigma$ is
exactly the same as checking whether $b^\prime(\rho)-ga^\prime(\rho)\in \Sigma$
where
$a^\prime(x)=\sum_{j=0}^{p(p-1)-1}a_{jp^{n-2}}x^j$ and
$b^\prime(x)=\sum_{j=0}^{p(p-1)-1}b_{jp^{n-2}}x^j$, with
$a^\prime(x),b^\prime(x)\in R_q^\prime=\mathbb{F}_q[x]/(\Phi_{p^2}(x))$. Thus,
the result of
Algorithm~\ref{alg2} on samples $(a_i(x),b_i(x))\in R_q^2$ is exactly the result
of
Algorithm~\ref{alg1} applied to the samples $(a_i^\prime(x),b_i^\prime(x))$ in
$(R_q^{\prime})^2$.
\label{remark1}
\end{remark}
The following result will also be useful in our proof:
\begin{lemma}
Let $\{(a_i(x),b_i(x))\}_{i=1}^M$ be a~set of input samples for
Algorithm~\ref{alg2}, where, as usual, the $a_i(x)$ are taken uniformly from
$R_{q,0}$ with probability $q^{-d}$. Then, for the corresponding input samples
$(a_i^\prime(x),b_i^\prime(x))$ for Algorithm~\ref{alg1}, the elements
$a_i^\prime(x)$ are taken uniformly from $R_{q}^\prime$ with probability
$q^{-p(p-1)}$.
\label{unifprima}
\end{lemma}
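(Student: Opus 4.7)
The plan is to observe that the projection sending $a(x) \in R_{q,0}$ to $a'(x) \in R_q'$ is, up to identifying coordinates, the first factor of a direct-product decomposition of $R_{q,0}$, and that uniform measure on a product projects to uniform measure on each factor.

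First, I would revisit the proof of Proposition \ref{dim0} to pin down which coefficients of $a(x) = \sum_{i=0}^{N-1} a_i x^i$ are constrained by membership in $R_{q,0}$. Writing each index as $i = v p^{n-2} + j$ with $0 \leq v \leq p(p-1) - 1$ and $0 \leq j \leq p^{n-2} - 1$, the defining conditions are the $p^{n-2} - 1$ equations
\[
\sum_{v=0}^{p(p-1)-1} a_{v p^{n-2} + j}\, \rho^v = 0, \qquad 1 \leq j \leq p^{n-2} - 1.
\]
The key point is that the index $j = 0$ is excluded, so the $p(p-1)$ coefficients $a_0, a_{p^{n-2}}, a_{2 p^{n-2}}, \ldots, a_{(p(p-1)-1) p^{n-2}}$ are entirely unconstrained, while the remaining $(p^{n-2}-1) \cdot p(p-1)$ coefficients satisfy $p^{n-2} - 1$ independent linear relations (one per value of $j \neq 0$), yielding the dimension count $p(p-1) + (p^{n-2}-1)(p(p-1)-1) = p^{n-1}(p-1) - p^{n-2} + 1$ already obtained in Proposition \ref{dim0}.

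Next, I would split the coefficient vector of $a(x)$ into the block $A = (a_{v p^{n-2}})_{v=0}^{p(p-1)-1}$ and the block $B = (a_{v p^{n-2} + j})_{v, j \neq 0}$. Since the constraints involve only $B$, we obtain an $\mathbb{F}_q$-linear isomorphism $R_{q,0} \cong \mathbb{F}_q^{p(p-1)} \times W$, where $W \subseteq \mathbb{F}_q^{(p^{n-2}-1) p(p-1)}$ is the kernel of the linear system above. By construction, the projection $\pi \colon R_{q,0} \to R_q'$ defined by $\pi(a) = a'$ is exactly the projection onto the first factor of this decomposition, because $a'(x) = \sum_{v=0}^{p(p-1)-1} a_{v p^{n-2}} x^v$ and $R_q' = \mathbb{F}_q[x]/(\Phi_{p^2}(x))$ has dimension $p(p-1)$ with $\{1, x, \dotsc, x^{p(p-1)-1}\}$ as a natural basis.

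Finally, I would conclude by a standard probabilistic fact: if $a$ is drawn uniformly from a finite product $U \times V$, its first coordinate is uniform on $U$. Applied to $R_{q,0} \cong \mathbb{F}_q^{p(p-1)} \times W$, this gives that $a' = \pi(a)$ is uniform on $R_q'$, so each element of $R_q'$ is hit with probability $q^{-p(p-1)}$. I do not anticipate a genuine obstacle here; the only subtle step is the observation that the index $j=0$ column is unconstrained, which is what makes $\pi$ a coordinate projection rather than a general surjection and thereby gives uniformity of the marginal with no further case analysis.
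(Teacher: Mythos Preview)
Your proposal is correct and follows essentially the same approach as the paper: both arguments rest on the observation from Proposition~\ref{dim0} that the defining equations of $R_{q,0}$ constrain only the coefficient blocks with $j\neq 0$, leaving the block $(a_{vp^{n-2}})_v$ that determines $a'(x)$ completely free. The paper carries this out by explicitly counting the $q^{(p(p-1)-1)(p^{n-2}-1)}$ preimages of each $a'(x)$ and multiplying by $q^{-d}$, whereas you package the same count as a product decomposition $R_{q,0}\cong\mathbb{F}_q^{p(p-1)}\times W$ and invoke the fact that uniform measure on a product projects to uniform measure on a factor; the two presentations are interchangeable.
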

\begin{proof}
For every sample $a(x)$ taken uniformly from $R_{q,0}$, if we
write, as in Proposition~\ref{dim0}
\[
a(x)=\sum_{j=0}^{p^{n-2}-1}\left(\sum_{v=0}^{p(p-1)-1}a_{vp^{n-2}+j}x^{vp^{n-2}}
\right)x^j,
\]
we observe that the polynomial
$a_0(x)=\sum_{v=0}^{p(p-1)-1}a_{vp^{n-2}}x^{vp^{n-2}}$ will be sampled from
$R_{q,0}$ with probability $q^{-d}$ where $d=p^{n-1}(p-1)-p^{n-2}+1$. But for
each $j\in\{1,\dotsc,p^{n-2}-1\}$ and for each $p(p-1)$-tuple
\[
(a_{j},a_{p^{n-2}+j},\dotsc,a_{(p(p-1)-1)p^{n-2}+j})
\]
such that
$\sum_{v=0}^{p(p-1)-1}a_{vp^{n-2}+j}\rho^v =0$, the polynomial
\[
a_0(x)+\sum_{j=1}^{p^{n-2}-1}\sum_{v=0}^{p(p-1)-1}a_{vp^{n-2}+j}x^{vp^{n-2}+j}
\]
is also sampled with probability $q^{-d}$. These tuples form a~vector space of
dimension $p(p-1)-1$, hence, there are $q^{p(p-1)-1}$ of such tuples for every
$j$. Hence, for Algorithm~\ref{alg1}, the input sample
$a^\prime(x)=\sum_{v=0}^{p(p-1)-1}a_{vp^{n-2}}x^v$ (notice that
$a_0(x)=a^\prime(x^{p^{n-2}})$) will occur with probability $q^{-d}P$, where
$P$ is
the number of joint samples for all the $j's$ together, namely
$P=q^{(p(p-1)-1)(p^{n-2}-1)}$, hence, the sample $a^\prime(x)$ for
Algorithm~\ref{alg1} will occur with probability
\[
q^{-d+(p(p-1)-1)(p^{n-2}-1)}=q^{-p(p-1)}.
\qedhere
\]
\end{proof}

We can now study the probability of success of our attack:

\begin{proposition}
Assume that $|\Sigma|<q$. If Algorithm~\ref{alg2} returns
\textbf{NOT
PLWE}, then the samples come from the uniform distribution on $R_{q,0}\times
R_q$. If it outputs anything else than \textbf{NOT PLWE}, then the samples are
valid PLWE samples with probability $1-\left(|\Sigma|/q\right)^M$. In
particular, this probability tends to 1 as $M$ grows.
\end{proposition}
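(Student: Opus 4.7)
The plan is to reduce the statement to the already-established Proposition~\ref{pralg1} via the dictionary provided by Remark~\ref{remark1} and Lemma~\ref{unifprima}. First I would invoke Remark~\ref{remark1} to observe that, for every sample $(a_i,b_i)\in R_{q,0}\times R_q$ and every guess $g\in\mathbb{F}_q$, the test $\frac{1}{p^{n-2}}(Tr(b_i(\alpha))-a_i(\alpha)g)\in\Sigma$ performed by Algorithm~\ref{alg2} coincides with the test $b_i'(\rho)-g\,a_i'(\rho)\in\Sigma$ that Algorithm~\ref{alg1} would perform on the induced samples $(a_i',b_i')\in(R_q')^2$, where $R_q'=\mathbb{F}_q[x]/(\Phi_{p^2}(x))$ and $\rho\in\mathbb{F}_q$ is a~primitive $p^2$-th root of unity. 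Hence both algorithms generate exactly the same set $G$ and return the same output, so it suffices to bound the success probability of Algorithm~\ref{alg1} on the $(a_i',b_i')$.

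Next, I would verify that the two input scenarios are transported correctly through the map $(a,b)\mapsto(a',b')$. In the uniform case, Lemma~\ref{unifprima} shows that $a_i'$ is uniform in $R_q'$, while $b_i'$ consists of $p(p-1)$ of the coefficients of the uniform polynomial $b_i\in R_q$, and is therefore uniform in $R_q'$ and independent of $a_i'$. In the PLWE case, writing $b_i=a_is+e_i$ and using that $a_i(\alpha)\in\mathbb{F}_q$ together with the $\mathbb{F}_q$-linearity of $Tr$, the identities in the proofs of Proposition~\ref{dim0} and Remark~\ref{remark1} give
\[
b_i'(\rho) = a_i'(\rho)\,\tilde g + \varepsilon_i,
\]
with $\tilde g:=Tr(s(\alpha))/p^{n-2}\in\mathbb{F}_q$ and $\varepsilon_i$ precisely the quantity enumerated in the look-up table $\Sigma$, so that $(a_i',b_i')$ is a~genuine PLWE sample for $R_q'$ with secret $\tilde g$ and error in $\Sigma$ with overwhelming probability.

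Finally, applying Proposition~\ref{pralg1} to the samples $\{(a_i'(x),b_i'(x))\}_{i=1}^M$ under the hypothesis $|\Sigma|<q$ yields both claims at once: if Algorithm~\ref{alg1} returns \textbf{NOT PLWE}, then the $(a_i',b_i')$ come from the uniform distribution on $(R_q')^2$ and, by the correspondence of the first paragraph, the original samples $(a_i,b_i)$ are uniform on $R_{q,0}\times R_q$; otherwise, the samples are valid PLWE samples with probability $1-(|\Sigma|/q)^M$, which tends to $1$ as $M\to\infty$. The main obstacle is not computational but conceptual, and is already handled by Lemma~\ref{unifprima}: one needs the full uniformity of $a_i'$ in $R_q'$ (together with its independence from $b_i'$) for the failure probability of a~wrong guess $g$ to factor cleanly as $(|\Sigma|/q)^M$ across the $M$ samples, which is exactly what the passage from $R_{q,0}$ to $R_q'$ secures.
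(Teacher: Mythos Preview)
Your proof is correct and takes essentially the same approach as the paper: both identify the output of Algorithm~\ref{alg2} on $(a_i,b_i)$ with that of Algorithm~\ref{alg1} on the induced samples $(a_i',b_i')\in(R_q')^2$ via Remark~\ref{remark1}, transport the uniform/PLWE dichotomy through this map using Lemma~\ref{unifprima} together with the trace computation in the PLWE case, and then invoke Proposition~\ref{pralg1}. The paper merely packages the reduction as explicit event equalities ($E_q=E_q'$, $rNP=rNP'$, $rP\cup rNE=rP'\cup rNE'$) before computing the conditional probabilities, whereas you argue directly at the level of distributions; the substance is identical.
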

\begin{proof}
Set input samples $S=\{(a_i(x),b_i(x))\}_{i=1}^M$ and $S^\prime=\{(a_i'(x),
b_i'(x))\}_{i=1}^M$, and let us define the following events:
\begin{itemize}
\item $E_q=$ The input samples $S$ for Algorithm~\ref{alg2} are uniform,
\item $E_q^\prime=$ The input samples $S^\prime$ for Algorithm~\ref{alg1} are
uniform,
\item $rP=$ \emph{Algorithm~\ref{alg2} returns \textbf{PLWE}} on input samples
$S$,
\item $rP^\prime=$ \emph{Algorithm~\ref{alg1} returns \textbf{PLWE}} on input
samples $S^\prime$,
\item $rNE=$ \emph{Algorithm~\ref{alg2} returns \textbf{NOT ENOUGH SAMPLES}} on
input samples $S$,
\item $rNE^\prime=$ \emph{Algorithm~\ref{alg1} returns \textbf{NOT ENOUGH
SAMPLES}} on input samples $S^\prime$,
\item $rNP=$ \emph{Algorithm~\ref{alg2} returns \textbf{NOT PLWE}} on input
samples $S$,
\item $rNP^\prime=$ \emph{Algorithm~\ref{alg1} returns \textbf{NOT PLWE}} on
input samples $S^\prime$.
\end{itemize}

We clearly have $rP\cup rNE\subseteq rP^\prime\cup rNE^\prime$. On the other
hand, if $rP^\prime\cup rNE^\prime$ holds, it is because the set $G$ of guesses
for $s'(\rho)$ in Algorithm~\ref{alg1} on input samples $S^\prime$ has at
least one element, hence, this element will also be a~guess for $Tr(s(\alpha))$
in Algorithm~\ref{alg2} on input samples $S$ and hence $rP\cup rNE$ will also
hold. Henceforth
\[
rP\cup rNE=rP^\prime\cup rNE^\prime\mbox{ and }rNP=rNP^\prime.
\]
On the other hand, as we have pointed our in Remark~\ref{remark1}, we have that
$E_q\subseteq E_q^\prime$.

Further, if $E_q^\prime$ holds then, given $s\in\mathbb{F}_q$, by using
Lemma~\ref{unifl}, the elements $b_i^\prime(\rho)-sa_i^\prime(\rho)$ are
uniformly taken on $\mathbb{F}_q$. This fact implies that the input samples for
Algorithm~\ref{alg2} cannot come from the PLWE distribution:
Otherwise, if $(a_i(x),b(x)=a_i(x)s(x)+e_i(x))$ is a~PLWE sample for
Algorithm~\ref{alg2}, with $e_i(x)=\sum_{j=0}^{p^{n-1}(p-1)-1}\! e_{ij}x^j$,
then the terms $e_{ij}$ are taken from an $\mathbb{F}_q$-valued Gaussian
$N(0,\sigma)$ and so are taken, in particular, those of the form $e_{jp^{n-2}}$.
Hence for $s=Tr(s(\alpha))$ we have
\[
\frac{1}{p^{n-2}}Tr(b(\alpha))-\frac{1}{p^{n-2}}
Tr(a(\alpha))s=b^\prime(\rho)-sa^\prime(\rho)=\sum_{j=0}^{p(p-1)-1}e_{jp^{n-2}}
\rho^j,
\]
which is a~contradiction. Hence the input samples $S$ for Algorithm~\ref{alg2}
should be uniform and $E_q=E_q^\prime$.

Hence
\[
E_q\cap(rP\cup rNE)=E_q^\prime\cap(rP^\prime\cup rNE^\prime)\mbox{ and }E_q\cap
rNP=E_q^\prime\cap rNP^\prime.
\]
Hence if the algorithm returns \textbf{NOT PLWE} then
\[
P[E_q|rNP]=\frac{P[E_q\cap rNP]}{P[rNP]}=\frac{P[E_q^\prime\cap
rNP^\prime]}{P[rNP^\prime]}=P[E_q^\prime|rNP^\prime].
\]
On the other hand, if the algorithm returns anything else than \textbf{NOT
PLWE}, then:
\begin{align}
P[E_q|rP\cup rNE]
&= \frac{P[E_q\cap(rP\cup rNE)]}{P[rP\cup rNE]} \\
&= \frac{P[E_q^\prime\cap(rP^\prime\cup rNE^\prime)]}{P[rP^\prime\cup rNE^\prime]} \\
&= P[E_q^\prime|rP^\prime\cup rNE^\prime],
\end{align}
which equals $\left(|\Sigma|/q\right)^M$ due to Proposition~\ref{pralg1}.
\end{proof}

As pointed out in the introduction, suppose that there existed a~surjective ring
homomorphism $t:R_q\to R_{q,0}$ such that
$t\circ\mathcal{N}(0,\sigma)=\mathcal{N}^* (0,\sigma^*)$ where
$\mathcal{N}(0,\sigma)$ is a~centred discrete $R_q$-valued Gaussian
distribution of parameter $\sigma$ and $\mathcal{N}^* (0,\sigma^*)$ is a~centred
discrete $R_{q,0}$-valued Gaussian distribution of parameter $\sigma^*$. Since
$t$ would be in particular a~linear map, then for each $a^* (x)\in R_{q,0}$,
there would be exactly $|Ker(t)|$ many preimages of $a^* (x)$, hence the induced
map $t:R_q^2 \to R_{q,0}^2$ would take the uniform distribution over $R_q^2$ to
the uniform distribution over $R_{q,0}^2$. Likewise, and up to multiplication by
a scalar, it would take the PLWE distribution to the $R_{q,0}\times R_q$-PLWE
distribution, hence reducing the PLWE to the $R_{q,0}$-PLWE. However, it is not
clear at all for which rings (if any) such a~map $t$ exists.

\section{Coding examples}\label{s4}
To conclude our study, we provide numerical simulations of Algorithm~\ref{alg2}
for some specific sets of parameters. Our code has been developed with Maple 10
and is available at
\textsc{GitHub}\footnote{\url{https://github.com/raul-duran-diaz/PLWE-TraceAttack}}.
We have made no attempt at optimising our code, in particular, it does not
implement the automorphic evaluation of polynomials. This being said, we must
point out that the execution time is remarkably short, even in comparison with
the time necessary to obtain the sets of samples, for the parameters shown in
Table~\ref{tbl1}.

\subsection{Understanding our code}
Some remarks are in order to help understand our code: to begin with, we have not
simulated genuine discrete Gaussian distributions, which is a~non-trivial
problem but not entirely relevant when no high statistical accuracy is sought
after, as in most R/PLWE literature. Instead, we have discretised the regular
Gaussian distribution provided by Maple, using it as a~black box. As for uniform
distributions, we made use of Maple's random sampler \texttt{rand}, adjusted to produce
$\mathbb{F}_q$-samples.

Moreover, running each example, i.e., each Maple sheet, only requires choosing
the desired parameters in the ``main section'' and executing the sheet from the
beginning to the end. Notations for the Maple sheets follow closely those in
the present work in order to facilitate the reading and comprehension.

\subsection{Execution steps}
The execution consists of the following steps:
\begin{enumerate}
\item Initialising the uniform distribution (\texttt{rollq}) and the discrete
Gaussian (\texttt{X}).
\item Obtaining a~prime of the desired size meeting the hypotheses for
Theorem~\ref{chinese} to hold.
\item Obtaining the cyclotomic polynomial and its roots on an algebraic
extension, and assigning any one of them to the variable \texttt{rho}.
\item Obtaining the \emph{smallness region} $\Sigma$ for the input parameters.
\item Selecting a~number of executions (variable \texttt{ntests}) for
Algorithm~\ref{alg2}, and a~number of samples per execution (variable
\texttt{M}). Once these values have been assigned:
\begin{enumerate}
\item First, a~loop is executed \texttt{ntests} times and, for each turn,
\texttt{M}
samples from the PLWE oracle are generated, and passed to Algorithm~\ref{alg2}.
If it outputs anything different from a~set containing just one element, the
execution is counted as a~failure.
\item Second, another loop is executed, but this time producing the samples from
the
uniform oracle, and passing each set of samples to Algorithm~\ref{alg2}. If it
outputs anything different than an empty set, the execution is counted as a
failure.
\end{enumerate}
\end{enumerate}

\def\figurename{Table}

\subsection{Two execution examples}
Table~\ref{tbl1} shows the sets of parameters used for running two examples, and
Table~\ref{tbl2} presents a~secondary set of parameters, depending on
the selected ones in Table~\ref{tbl1}.

\begin{figure}[ht]
\centering
\begin{tabular}[c]
{lcc}
\hline
\textbf{Parameter} & \textbf{Example 1} & \textbf{Example 2} \\
\hline
$p$ & $2$ & $2$ \\
$n$ & $10$ & $11$ \\
$A$ & $2$ & $2$ \\
$q$ & $24029$ & $40013$ \\
$\sigma$ & $8$ & $8$ \\
\texttt{ntests} & $5$ & $5$ \\
\texttt{M} & $10$ & $10$ \\
\hline
\end{tabular}
\caption{Parameter selection for Examples 1 and 2}
\label{tbl1}
\end{figure}

\begin{figure}[ht]
\centering
\begin{tabular}[c]
{lcc}
\hline
\textbf{Dependent param} & \textbf{Example 1} & \textbf{Example 2} \\
\hline
Polynomial $\Phi$ & $x^{512}+1$ & $x^{1024}+1$ \\
$m$ & $1024$ & $2048$ \\
$N$ & $512$ & $1024$ \\
Factors of $\Phi$ over $\mathbb{F}_q$ & \small{$(x^{256}+11937)(x^{256}+12092)$}
                                      & \small{$(x^{512}+27481)(x^{512}+12532)$}
\\
$\rho$ & $-11937$ & $-27481$ \\
\hline
\end{tabular}
\caption{Dependent parameters for Examples 1 and 2}
\label{tbl2}
\end{figure}

\subsection{Conclusions on the numerical results}
To finish this section, several conclusions can be drawn.
\begin{itemize}
\item
First of all, it is interesting to remark that, regarding running times,
the most time consuming part is the process of sampling generation. For
Example 2, our hardware platform (Virtual Box configured with 1 GB of main
memory running over Intel CORE i5, @2.2 GHz) needs
about $200$ seconds to generate a~set of $10$ samples of any kind, but just
about 1 second for running Algorithm~\ref{alg2} over that set. In any case,
the attack is clearly feasible within very modest resource requirements.

\item
In the second place, but of much more interest, is the fact that execution
succeeds smoothly both for the PLWE and the uniform oracles, and this happens
for the two examples. This is in perfect agreement with the results predicted
by the theory, thus giving strong support to the effectiveness of the decision
attack presented in this work.
\end{itemize}

\subsubsection*{Acknowledgements}
I. Blanco-Chac\'on is partially supported by the Spanish National Research
Plan, grant no MTM2016-79400-P, by grant PID2019-104855RBI00, funded by MCIN / AEI / 10.13039 / 501100011033 and by the University of Alcal\'a grant CCG20/IA-057. 
R. Dur\'an-D\'iaz is partially supported by grant P2QProMeTe (PID2020-112586RB-I00), funded by MCIN / AEI / 10.13039 / 501100011033.
R.Y. Njah Nchiwo is supported by a PhD
scholarship from the Magnus Ehrnrooth Foundation, Finland, in part by Academy of Finland, grant 351271 (PI: C. Hollanti) and in part by MATINE Finnish Ministry of Defence, grant \#2500M-0147 (PI: C. Hollanti).
B. Barbero-Lucas is partially supported by the University of Alcal\'a grant CCG20/IA-057.

%%% REFERENCES %%%
{\small

}

\EditInfo{April 4, 2023}{June 26, 2023}{Camilla Hollanti and Lenny Fukshansky}

\end{document}